\newtheorem{definition}{Definition}
\newtheorem{lemma}{Lemma}
\newtheorem{theorem}{Theorem}
\newtheorem{corollary}{Corollary }
\newcommand{\RNum}[1]{\lowercase\expandafter{\romannumeral #1\relax}}
\newcommand{\RNumUpper}[1]{\uppercase\expandafter{\romannumeral #1\relax}}
\newcommand\blfootnote[1]{%
  \begingroup
  \renewcommand\thefootnote{}\footnote{#1}%
  \addtocounter{footnote}{-1}%
  \endgroup
}
\title{Cooperation, Retaliation and Forgiveness in Revision Games\blfootnote{Emails: haodong@uestc.edu.cn, qishi@std.uestc.edu.cn; huizhang@std.uestc.edu.cn; boan@ntu.edu.sg}}
\author{Dong Hao$^1$\blfootnote{Dong Hao is the corresponding author.}\blfootnote{This work has been submitted to the IEEE for possible publication. Copyright may be transferred without notice, after which this version may no longer be accessible.} \and Qi Shi$^1$ \and Jinyan Su$^1$ \and Bo An$^2$}
\date{%
    $^1$University of Electronic Science and Technology of China\\%
    $^2$Nanyang Technological University\\[2ex]%
}
\begin{document}

\maketitle
	
\begin{abstract}
Revision game is a very new model formulating the  real-time situation where players dynamically prepare and revise their actions in advance before a deadline when payoffs are realized. It is at the cutting edge of dynamic game theory and can be applied in many real-world scenarios, such as eBay auction, stock market, election, online games, crowdsourcing, etc. In this work, we novelly identify a class of strategies for revision games which are called Limited Retaliation strategies. An limited retaliation strategy stipulates that, (\RNum 1) players first follow a recommended cooperative plan; (\RNum 2) if anyone deviates from the plan, the limited retaliation player retaliates by using the defection action for a limited duration; (\RNum 3) after the retaliation, the limited retaliation player returns to the cooperative plan. A limited retaliation strategy has three key features. It is cooperative, sustaining a high level of social welfare. It is vengeful, deterring the opponent from betrayal by threatening with a future retaliation. It is yet forgiving, since it resumes cooperation after a proper retaliation. The cooperativeness and vengefulness make it constitute cooperative subgame perfect equilibrium, while the forgiveness makes it tolerate occasional mistakes. limited retaliation strategies show significant advantages over Grim Trigger, which is currently the only known strategy for revision games.
Besides its contribution as a new robust and welfare-optimizing equilibrium strategy, our results about limited retaliation strategy can also be used to explain how easy cooperation can happen, and why forgiveness emerges in real-world multi-agent interactions. In the application aspect, limited retaliation strategies are especially suitable for real-world games with deadlines where players confront observation or action errors, or where players have limited rationality. In addition, limited retaliation strategies are simple to derive and computationally efficient, making it easy for algorithm design and implementation in many multi-agent systems.

\end{abstract}

\section{Introduction}
\subsection{Background and Related Works}
In many multi-agent systems, agents often prepare their actions in advance before a deadline when their payoffs are realized. Such systems can be captured by revision game, which is an emerging model of dynamic game theory \cite{kamada2020revision,kamada2020cooperation}. In a revision game, at the initial time, agents choose their actions and wait for the deadline. Between the initial time and the deadline, they may have opportunities to revise actions. The revision opportunities arrive stochastically and the probability to get another revision opportunity vanishes as time approaches the deadline. A player's only payoff is obtained at the deadline, depending on all agents' final actions.

Many real-world scenarios can be modeled as revision games. A well-studied case is the pre-opening phase in stock markets such as Nasdaq or Euronext \cite{cao2000price,biais1999price}. Traders can submit orders before the opening of the market, which can be changed until the opening time. The online auction websites like eBay hold auctions which have a deadline and bidders actually play a revision game \cite{roth2002last}. The eBay auctions usually have a deadline, before which bidders can revise their bids many times. Bidders' opportunities to use eBay and to change bids are following a stochastic process (many human activities can be characterized by a Poisson process). 
Their revising the strategies must take into consideration of the bids of the opponents and the strategy will not bring about revenue until the auction ends. Similar auctions include spectrum auction whereby a government uses an auction system to sell the rights to transmit signals over specific bands of the electromagnetic spectrum and to assign scarce spectrum resources, and such big auctions usually have fixed deadlines \cite{milgrom2021auction}.
Electoral campaign is also a good example \cite{warshaw2020fatalities}. During the $2020$ U.S. presidential election, the COVID$-19$ pandemic brings about a series of political agenda shifts and campaign tactic adjustments for the major presidential candidates.
AI game playing problems can also be modeled as revision games. Many recent multiplayer video games (such as \emph{Need for Speed} series and \emph{Call of Duty} series) involve some form of time limit on their online servers \cite{claypool2006latency,claypool2010latency}. Whichever team/player holds the lead at the expiration of the time limit is declared the winner. Another example is federated learning. Large companies may ask individuals to submit their data to train a model before some fixed deadline \cite{prakash2020coded}. The relation between different individuals can be modeled as a Prisoner's Dilemma. Moreover, crowdsourcing competitions or contests on Github or TopCoder usually have a deadline; participants modify their submissions many times before the deadline \cite{saremi2020right}.

All above mentioned secnarios can be modeled as a game where players prepare and revise their strategies before a deadline, which is the model of the revision game \cite{kamada2020revision}. A revision game starts at time $-T$ and ends at time $0$. Players prepare actions at he beginning of the game and they can only revise their actions when they obtain revision opportunity, which arises according to a Poisson process and the payers simultaneously revise their actions. The actions that are played at the ``deadline" determined their final payoff.
The general framework of revision games is very new and is formally published in \cite{kamada2020revision,kamada2020cooperation}. But the original idea can be traced back to the authors' online working draft before 2014. They identify the very first strategy for revision games, which is based on the grim trigger mechanism, and constitutes equilibrium.
	Following Kamada and Kanrodi, all the existing strategies of revision games follow a grim trigger style. Strategies of this style has a very fierce punishment and is never forgiving; it only proves cooperation is possible but does not answer how easy it is in revision games. It also excludes the possibility of resuming cooperation. A few works study the grim trigger equilibrium in games where the revision process is player-specific  \cite{calcagno2014asynchronicity},\cite{kamada2014valence}, \cite{gensbittel2018zero}. A later work examines the trembling hand equilibrium for revision games with imperfect information \cite{moroni2015existence}. Another work goes even further and extends the solution concept to Markov perfect equilibrium \cite{lovo2015markov}. Besides these theoretical achievements, researches also investigate different applications \cite{ambrus2015continuous, kamada2020cooperation} and even conduct laboratory experiments for revision games \cite{roy2017revision}.

\subsection{Intuition and Contribution}


	Our problem is about how to sustain mutual cooperation and how to improve the social welfare in revision games. Revision game are dynamic. It lasts for a certain duration (i.e., time interval $[-T,0]$). When players are engaged in a revision game, mutual cooperation could be achieved while players actions evolve. This is due to the fact that agents must consider not only the immediate best response but also the long-term expected payoffs. Thus one player has the ability to deter the opponent from non-cooperative immediate action by threatening retaliation that reduces the opponent's expected payoff. 
	
	In revision games, the cooperation and mutual threat can be realized by an equilibrium strategy, which is consisted of two parts: cooperative plan and punishment scheme. As long as the opponent deviates from the cooperative plan at any time $t$, a player will start a punishment scheme. For example, for a player $i$, she can cooperate in the beginning, if she see her opponent $j$ deviates from cooperating at any time $-t$, then $i$ will retaliate against $j$ in the remaining time. Such a potential retaliation is a threatening that reduces $j$'s long-term expected payoff, whereby $i$ can deter $j$ from exploiting his short-term advantage. In this way, a certain level of mutual cooperation can be sustained. If a profile of such an equilibrium strategy is a mutual best response, then players playing such strategies is an equilibrium for the revision game.

In this work, we follow the standard model of the revision games given by \cite{kamada2020revision}. However, we novelly identify the Limited Retaliation (LR) strategies for revision games. LR strategies consist of two components: One is the recommended \emph{\textbf{collusive plan}} which achieves an optimal social welfare for players; The other is the \emph{\textbf{retaliation phase}} in which the defection action is used for a limited period. The LR strategy stipulates that, if anyone fails to follow the collusive plan, players switch to the retaliation phase. LR strategies work in a ``carrot-and-stick'' style, where the collusive plan is the carrot and retaliation phase is the stick.
Specifically, our contributions are:

    \begin{enumerate}
        \item   We introduce the one-shot deviation principle into revision games and derive the necessary and sufficient condition for LR strategies to constitute subgame perfect equilibrium. The conditions are captured by an incentive constraint which is the difference between the gain for current deviation and the loss for future retaliation.

        \item  By satisfying the incentive constraint using backward induction, We find a class of equilibrium LR strategies that sustain a high level of mutual cooperation. Their plans are given in a simple form and are easy to implement. We also obtain the social welfare maximizing ones in this class of LR strategies. These collusive plans start from fully cooperative action, evolves over time and become more and more in-cooperative as time elapses.

        \item  The retaliation phase of LR strategies is designed to have a limited and controllable fierceness. It not only can exert enough threat on the opponent to ensure him not to deviate, but also allows players to forgive the deviator once his deviation has been sufficiently retaliated. Thus players can resume collusion after this retaliation phase.  The forgiveness makes LR strategies especially robust against noises, errors, and limited rationality.

        \item  We also implement the optimal LR strategies in both the revision versions of Prisoner's Dilemma and the Cournot games. Empirical results show that LR strategies show a significant robustness than grim trigger strategies.
    \end{enumerate}

Our LR strategies is the first attempt to explain \textit{how easy cooperation can happen}, and \textit{why forgiveness emerges} in multi-agent interactions with a deadline.
The LR strategies have three good features: cooperative, vengeful and forgiving. They are cooperative since LR strategy players follow a coordination plan if there is not deviation. Cooperativeness makes the payers enjoy high social welfare \cite{orbell1993social}. LR strategies are vengeful since they can deter the opponent from non-cooperative action at any time by threatening a sufficient retaliation that reduces the opponent's expected payoff in the remaining time. By using this vengeful feature, LR strategies can constitute cooperative subgame perfect equilibrium \cite{bielefeld1988reexamination}. The other feature is they are forgiving. Since the retaliation phase has a controllable length, even when there is an intentional or unintentional deviation, LR players can return back to mutual cooperation after the retaliation. This forgiving feature makes LR strategies especially suitable for real-world games where players confront observation or action errors \cite{gilpin2007lossless,fundenberg1990evolution}, or where players have limited rationality \cite{doyle1992rationality}.

LR strategies are in contrast to existing strategies for revision games which assumes players are extremely grim and never forgive. The forgiveness in LR strategies is of great importance. This is because in realistic games, due to agents' observation/action errors or bounded rationality, deviation occasionally happens either intentionally or unintentionally. If players play a non-forgiving strategy, they can never return back to mutual cooperation and the social welfare would be significantly harmed. Moreover, forgiveness widely exists in the real world, our LR strategy is the first attempt to explain the incentive and mechanism for cooperation and forgiveness in multi-agent games with deadline.

\section{Revision Game Model}
\subsubsection{Game Setting}
Consider a general two-player symmetric game in which there is a  Nash equilibrium action profile $(a^N,a^N)$, where both players defect, and the equilibrium payoffs $\pi^N:=\pi(a^N,a^N)$. For such a game, denote the optimal symmetric action profile as $(a^*,a^*)$ and $a^*:=\arg \max_{a} \pi(a,a)$. The optimal action profile is the social welfare maximizing, in the sense that the system's total payoff is maximized. The possible action space $A$ for each player is a convex subset (interval) in $\mathbb R$.
Assume any player $i$'s payoff $\pi_i$ is continuous.
    Denote the payoff for a symmetric action profile  $(a,a)$ as $\pi(a)$. Assume this payoff function is strictly increasing for $a \in[a^N, a^*]$ and non-decreasing for $a>a^*$, if $a^N < a^*$ (symmetric conditions hold if $a^N > a^*$). 
The conventional Prisoner's Dilemma with continuous actions and Cournot duopoly both have these two properties. When such a game happens only once, we call it the \emph{{stage game}}. 


A revision game is a dynamic extension of the above stage game. 
It starts at time $-T$ ($T>0$) and ends at deadline $0$. At $-T$, players choose actions simultaneously. At any time $-t\in \left(T,0\right]$, they may have \textit{\textbf{revision opportunities}} to play the stage game again, and the opportunities arrive according to a Poisson process with arrival rate $\lambda>0$. Players can revise their actions only when they have a revision opportunity. 
In Revision games, players' payoffs $\pi_i(a'_i,a'_j)$ and $\pi_j(a'_i,a'_j)$ are realized at the deadline, where $(a'_i,a'_j)$ is the profile of players' revised actions at the final revision opportunity. After the final revision opportunity, their actions stay the same till the deadline.

Players' behaviors in the revision game are captured by a mapping $\sigma:[-T,0]\rightarrow A$. We  call this mapping $\sigma(\cdot)$ the \emph{{strategy}} of  players. At each  time point $-t$, the strategy decides an action $\sigma(t)$. The value $t$ is the remaining \textit{time length} until the deadline, which can be regarded as a payoff-relevant state variable for the players. Note that when there is a revision opportunity at $-t$, a player may or may not adjust her action. 
We say that an action $\sigma(t)$ is \emph{\textbf{cooperative}} or $\sigma(t)$ sustains a certain level of \emph{{cooperation}}, if it provides a \emph{higher payoff than the Nash equilibrium payoff of the stage game}. That is, if both players use $\sigma(t)$, then $\pi\big(\sigma(t)\big)=\pi_i\big(\sigma(t),\sigma(t)\big)>\pi^N$. One can see there are different levels of cooperation.  
	
\subsubsection{Equilibirum Concept}
Now we introduce the revision opportunities, quantify the expected payoffs, and define the equilibrium concept we will use in this paper. 

With the Poisson arrival rate $\lambda$, we know $e^{-\lambda t}$ is the probability of no Poisson arrival in the remaining time length $t$ and $\lambda e^{-\lambda t}$ is the probability density of the last revision opportunity.
The expected payoff to each player associated with strategy $\sigma$ over time span $[-T,0]$ is:
\begin{equation}\label{Eq:expected_payoff}\begin{aligned}
V(\sigma, T)=\pi\big(\sigma(T)\big)e^{-\lambda T}+\int_0^T\pi\big(\sigma(t)\big)\lambda e^{-\lambda t} \mathrm{d}t.
\end{aligned}\end{equation}
Eq.(\ref{Eq:expected_payoff}) is constructed as follows: $\sigma(T)$ specifies players' initial actions. Payoff $\pi\big(\sigma(T)\big)$ can be realized only when there is no revision opportunities arising during $[-T,0]$, which happens with probability $e^{-\lambda T}$. When a revision opportunity arrives
at time $-t$, players choose action $\sigma(t)$. Payoff $\pi\big(\sigma(t)\big)$ is realized only when there is no revision opportunities arrive after $-t$, and this happens with a probability density $\lambda e^{-\lambda t}$.
Thus the integral is the expected payoff when the last revision opportunity arrives during $(-T,0]$. 


In a revision game, a {\textit{subgame}} $\Lambda(t)$ begins at a particular time point  $-t$, and contains the rest of the game from that point forward.
The whole revision game is the largest subgame, and the smallest subgame is the single stage game at the deadline. Every subgame can be transformed into a normal-form bi-matrix game. 
 For the smallest subgame, the action profile $(a^N,a^N)$ is its only Nash equilibrium. However, for other subgames which can be represented by a large matrix, there could be multiple Nash equilibria. 
 \begin{definition}[NE]\label{NE Definition}
A strategy profile ${{\bf \sigma}=(\sigma_i, \sigma_{-i})}$ is a Nash equilibrium at a given time $-t$, if for any player $i$, all $\sigma_i'\ne \sigma_i$, the expected payoff for the subgame $\Lambda(t)$ satisfies
$V_i\left((\sigma_i, \sigma_{-i}), t\right)\ge V_i\left((\sigma_i', \sigma_{-i}), t\right)$.
\end{definition}
 
We focus on the refinement of Nash equilibrium, which is called \textit{subgame perfect equilibrium (SPE)}\cite{bielefeld1988reexamination}.
\begin{definition}[SPE]\label{SPE Definition}
A strategy profile ${{\bf \sigma}=(\sigma_i, \sigma_{-i})}$ is a subgame perfect equilibrium for the whole revision game, if for any player $i$, all $\sigma_i'\ne \sigma_i$,
$V_i\left((\sigma_i, \sigma_{-i}), t\right)\ge V_i\left((\sigma_i', \sigma_{-i}), t\right)$ for all subgames $\Lambda(t)$ with any $t\in[-T,0]$.
\end{definition}
That is, a strategy profile ${\bf \sigma}$ is a subgame perfect equilibrium if it represents a Nash equilibrium of every subgame of the whole revision game. 
Every SPE is a NE, but not the other way around. These two definitions also cover the  revision games with symmetric actions and payoffs.
In the following contents, we use SPE as our solution concept.

	\section{Limited Retaliation Strategies}
	In this section, we propose Limited Retaliation (LR) strategies that constitute SPE. They are the only known equilibrium strategies in revision games besides Grim Trigger.

\subsection{Formalization of LR Strategies}

\subsubsection{Limited Retaliation Strategy}The intuition for designing LR strategies is: Firstly, it should recommend a cooperative plan. If both players follow this plan, they will have high expected payoffs. Secondly, to ensure players not to deviate from the cooperative plan, the strategy should also provide a  retaliation which is used to threaten the opponent not to deviate. Finally, if retaliation is too soft, it cannot have enough threat; if it is too strong, it will be hard to return to cooperation. Thus the retaliation should have a limited degree.
 	\begin{definition}[LR Strategy]\label{Definition:LR_strategy}
     A Limited Retaliation strategy is a tuple $\sigma=\{x(\cdot),k\}$. Function $x(\cdot)$ is a recommended cooperative plan specifying actions $a=x(t)$ for each $-t$. Coefficient $k\in(0,1)$ scales the duration of the retaliation phase, where both players use the defection action $a^N$.
	\end{definition}

LR Players start with the initial action $x(T)$. When a revision opportunity arrives at any time $-t$, they choose action specified by $x(t)$. If
any player deviates from $x(t)$ when a revision opportunity arrives at $-t$, both players
choose the Nash defection action $a^N$ in all revision opportunities
arriving during period $(-t,-t+kt]$; if no revision opportunities arrive during this period, don't need to do anything. After this period, when a revision opportunity arrives, return back to follow  $x(\cdot)$ no matter what happened during $(-t,-t+kt]$.

We would like to emphasize that the cooperative plan $x(\cdot)$ should not be confused with the actual path of revised actions. The intuition of an LR strategy $\sigma=\{x(\cdot),k\}$ is: it wants both players' actual action paths to agree with the  cooperative plan $x(\cdot)$. This is ensured by threatening the opponent with a retaliation quantified by coefficient $k$.


\subsubsection{Cooperative Plan}We now clarify what is a \textit{cooperative} plan. Recall that in stage games (e.g., Prisoner's Dilemma), the defection action $a^N$ is the Nash equilibrium action, and it leads to the lowest social welfare. A fundamental objective in revision games is to incentivize the players to avoid taking this defection action $a^N$ for a long enough time. With this objective, the cooperative plan is defined as follows.
\begin{definition}[Cooperative Plan]
     A plan $x(\cdot)$ recommended by the LR strategy is called cooperative, iff $\forall-t, x(t)\ge a^N$. That is, at each time $-t$, $x(\cdot)$ specifies a cooperative action, which leads to better social welfare than $a^N$.
\end{definition}

By definition of LR strategies, as long as any player (either the opponent or the focal player, or both) deviates at $-t$, then the focal player will change to $a^N$ for a limited period with coefficient $k$. The deviator's retaliation (i.e., changing to $a^N$) after her own deviation can be seen as a \textit{self-defense} since the focal player knows that after her deviation, the opponent will start retaliation against her, and the Nash action $a^N$ is   the \textit{secure} action for stage games. 



If the retaliation is never triggered, then both players' actual path of play perfectly follow the recommended cooperative plan $x$. This indicates that with LR strategies, \emph{{cooperation can emerge}}. If there is any deviation by any player, then the retaliation phase is triggered. This potential retaliation phase is a threatening to the opponent not to deviate, indicating that \emph{{cooperation can be secured}}. LR strategies are ``forgiving'' and the fierceness of retaliation is controlled by $k$. After the retaliation, LR strategies return to cooperation. This indicates \emph{{cooperation can be restored}}.

\subsection{Algorithm for Implementing LR Strategies}\label{SI:section:Algorithm_for_Generating_LR}
The following algorithm can be used to implementing  an LR strategy. To run this algorithm, one needs to decide the collusive plan $x$, retaliation fierceness $k$ in advance. These issues have been thoroughly discussed in the main text sections ``Equilibrium Computation'' and ``Social Welfare Maximization''.

 	\begin{algorithm}[ht]
		\KwIn{collusive plan $x(\cdot)$, defection action $a^N$, \\retaliation degree $k$, initial time $-T$}
		\KwOut{LR strategy $\sigma_i(t)$ for all $t\in[0,T]$}

		\textbf{Initialize}: \textbf{set} $\sigma(T)=x(T)$\\

        \For{$-T <-t < 0$}{
            \If{$\sigma_i(t-1)=\sigma_j(t-1)=x(t-1)$}{
		        \If{revision opportunity arrives at $-t$}{
		            \textbf{set} $\sigma_i(t)=x(t)$
		        }
		        \Else{
		            \textbf{set} $\sigma_i(t)=\sigma_i(t-1)$
		        }
		        $-t=-t+1$
		    }
		    \Else{		
			    \textbf{set} $-\tau=-t+kt$ \\
			    \While {$-t\le-\tau$ }{
 		            \If{revision opportunity arrives at $-t$}{
 		                 $\sigma_i(t)=a^N$ 
 		            }
 		            \Else{
 		                 $\sigma_i(t)=\sigma_i(t-1)$
 		            }
 		        $-t=-t+1$
 	       	    }
 		    }
        }
		\caption{LR strategy for an arbitrary player $i$}
		\label{Alg:LR-K}
	\end{algorithm}

	In this algorithm, $\sigma_i(t)$ (or $\sigma_j(t)$) is player $i$'s (or $j$'s) action specified by the LR strategy at time $-t$. Function $x$ is the recommended collusive plan specifying an action at each $-t$. Line 3-8 shows if both players followed the recommended action at time $-t+1$, then the LR strategy player still follow the collusive plan at time $-t$. Otherwise, as it is shown in line 9-16, the LR player will set a retaliation length, and then use the defection action for this length. In addition, any adjustment of $\sigma_i(t)$ requires there should arrive a revision opportunity at time $-t$, which is captured by line 6-7 and line 14-15.

\section{Equilibrium Condition}
After the above formalization of LR strategies, now a natural question is: can LR strategies $\sigma=\{x(\cdot),k\}$ constitute SPE? This section is devoted to answering this question. We find a necessary and sufficient condition for any LR strategy to constitute (symmetric) subgame perfect equilibrium (SPE).


  
\subsection{One-Shot Deviation Principle}  
We first introduce the one-shot deviation principle (OSDP) tailored for revision games, by which we can further prove that LR strategies constitute SPE in the next subsection. 
  
OSDP is the key to verify whether a strategy ${\bf  \sigma}_i$ constitutes SPE \cite{mailath2006repeated}. A \textit{\textbf{one-shot deviation}} from  $\sigma_i$
is a strategy $\sigma'_i$ that agrees with $\sigma_i$ at all times except $-t$, i.e., $\exists ! t$ such that $\forall t^1\ne t, \sigma'_i(t^1)=\sigma_i(t^1)$. A one-shot deviation is \textit{profitable} if $    V_i\big((\sigma_i',\sigma_{-i}),t\big)>V_i\big((\sigma_i,\sigma_{-i}),t\big).$
We now tailor OSDP to revision games.
\begin{lemma}[One-Shot-Deviation Principle]\label{lemma:one-shot deviation principle}
In revision games, a strategy profile $\bf  \sigma$ is SPE iff there is no profitable one-shot deviation at any time.
\end{lemma}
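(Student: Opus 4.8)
The plan is to handle the two directions separately, with essentially all the work in the ``if'' direction. The ``only if'' direction is immediate: a one-shot deviation is in particular some $\sigma_i'\neq\sigma_i$, so if ${\bf \sigma}$ is SPE then Definition~\ref{SPE Definition} already forbids it from being profitable in any subgame. For the ``if'' direction I would use the usual ``unimprovability $\Rightarrow$ optimality'' route: assuming no player has a profitable one-shot deviation \emph{in any subgame}, I show that for every player $i$, every subgame at remaining time $t$ (with arbitrary, possibly off-path, history $h$), and every alternative strategy $\sigma_i'$, one has $V_i\big((\sigma_i',\sigma_{-i}),t\big)\le V_i\big((\sigma_i,\sigma_{-i}),t\big)$. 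The first ingredient is the recursive (Bellman) form of the payoff \eqref{Eq:expected_payoff}, obtained by conditioning on the \emph{first} revision opportunity after $-t$ and using the memorylessness of the Poisson process: for any strategy pair and any history,
\[
V_i\big((\sigma_i,\sigma_{-i}),t\big)=e^{-\lambda t}\,\pi_i\big(\sigma_i(t),\sigma_{-i}(t)\big)+\int_0^t \lambda e^{-\lambda (t-s)}\,V_i\big((\sigma_i,\sigma_{-i}),s\big)\,\mathrm{d}s,
\]
where the continuation value inside the integral is that of the subgame entered at that first revision (at time $-s$), carrying the updated history --- in particular $\sigma_{-i}$'s reaction to whatever $\sigma_i$ did at $-t$.

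I would then introduce a ``regret function'' $f(t):=\sup\big\{\,V_i\big((\sigma_i',\sigma_{-i}),t\big)-V_i\big((\sigma_i,\sigma_{-i}),t\big)\,\big\}$, the supremum taken over both players, all subgames at remaining time $t$, and all deviations $\sigma_i'$; note that $f\ge 0$ and that ${\bf \sigma}$ is SPE exactly when $f\equiv 0$. Fixing a subgame at remaining time $t$ and a deviation $\sigma_i'$, let $a':=\sigma_i'(t)$ be the action it uses at $-t$ and apply the recursion to $(\sigma_i',\sigma_{-i})$. Inside the integral the continuation is $\sigma_i'$ (still a deviation) against $\sigma_{-i}$ in the relevant subgame at remaining time $s$, which by definition of $f$ is at most the corresponding value under $\sigma_i$ plus $f(s)$. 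Splitting off the $f(s)$ term, what is left is precisely the value of the \emph{one-shot} deviation ``play $a'$ at $-t$, then revert to $\sigma_i$'', which by hypothesis does not exceed $V_i\big((\sigma_i,\sigma_{-i}),t\big)$. Hence
\[
V_i\big((\sigma_i',\sigma_{-i}),t\big)-V_i\big((\sigma_i,\sigma_{-i}),t\big)\;\le\;\int_0^t \lambda e^{-\lambda(t-s)}\,f(s)\,\mathrm{d}s,
\]
and, since the right-hand side does not depend on the chosen subgame or deviation, taking the supremum on the left yields $f(t)\le \int_0^t \lambda e^{-\lambda(t-s)}\,f(s)\,\mathrm{d}s$ for every $t\in[0,T]$.

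To close the argument I would use the ``effective discounting'' built into the Poisson clock. Since $\pi$ is continuous on the action interval (bounded in the Prisoner's Dilemma and Cournot applications), value differences are uniformly bounded, so $M:=\sup_{t\in[0,T]}f(t)<\infty$; substituting $f(s)\le M$ into the integral gives $f(t)\le M\big(1-e^{-\lambda t}\big)$, and taking the supremum over $t\in[0,T]$ gives $M\le M\big(1-e^{-\lambda T}\big)$, i.e. $M e^{-\lambda T}\le 0$. As $e^{-\lambda T}>0$ and $M\ge 0$, this forces $M=0$, hence $f\equiv 0$ and ${\bf \sigma}$ is SPE. Note that, unlike in infinite-horizon repeated games, no ``continuity at infinity''/truncation step is needed: the factor $1-e^{-\lambda t}<1$ on the bounded horizon $[-T,0]$ already makes the operator $f\mapsto\big(t\mapsto\int_0^t\lambda e^{-\lambda(t-s)}f(s)\,\mathrm{d}s\big)$ a strict contraction.

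The integral bookkeeping and the final estimate are routine; the part that needs genuine care --- and what I expect to be the main obstacle --- is the continuous-time setup underneath: pinning down histories, subgames, and one-shot deviations in continuous time, and securing enough measurability of strategies and value functions that the Bellman recursion and the Lebesgue integrals above are well defined. One should also state explicitly that ``no profitable one-shot deviation at any time'' means in \emph{every} subgame, including those reached after earlier deviations (e.g. inside a retaliation phase), since the supremum defining $f$ ranges over all such histories; restricting attention to on-path subgames would make the lemma false. Once the recursion is established the rest is short.
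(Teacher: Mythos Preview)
Your proof is correct but takes a genuinely different route from the paper. The paper argues by backward induction in the style of the discrete-time repeated-games proof: assuming a profitable multi-step deviation $\sigma_i'$ exists, it locates the \emph{last} time $-\bar\tau$ at which $\sigma_i'$ differs from $\sigma_i$, observes that the single change at $-\bar\tau$ cannot help (no profitable one-shot deviation), replaces $\sigma_i'(\bar\tau)$ by $\sigma_i(\bar\tau)$ without loss of payoff, and iterates this peeling (the paper literally writes $-\bar\tau-1,-\bar\tau-2,\ldots$) until only a one-shot deviation at $-t$ remains, yielding a contradiction. You instead exploit the Poisson structure directly: condition on the \emph{first} arrival to get a Bellman recursion, define a regret function $f$, derive the integral inequality $f(t)\le\int_0^t\lambda e^{-\lambda(t-s)}f(s)\,\mathrm{d}s$, and close with a contraction/Gr\"onwall step. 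Your route is arguably the cleaner one in this continuous-time setting --- it sidesteps the paper's implicit discretization (``last time of difference'' and unit decrements are awkward in continuous time) --- at the cost of needing bounded payoffs to ensure $M<\infty$, which holds in the paper's applications. Your caveats about measurability and about the hypothesis having to cover \emph{all} subgames (including off-path ones) are well placed; the paper glosses over both.
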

\begin{proof}
 The ``only if'' condition is immediate from the definition of SPE. For sufficiency, suppose that for strategy profile ${\bf  \sigma}=(\sigma_i,\sigma_{-i})$, there is no profitable one-shot deviation, but it is not SPE. 
 Since $\bf  \sigma$ is not SPE, there must be some time $-t$ and a player $i$ with a better strategy $\sigma_i'\ne\sigma_i$, such that
\begin{equation}\label{eq:osd-1}
    V_i\big((\sigma_i',\sigma_{-i}),t\big)>V_i\big((\sigma_i,\sigma_{-i}),t\big).
\end{equation}
Let $-\bar \tau$ be the last time that $\sigma_i'$ and $\sigma_i$ differ.
Thus after time $-\bar \tau$, $\sigma_i'$ and $\sigma_i$
are identical. Since it is assumed there is no profitable one-shot deviation, at time $-\bar \tau$, $\sigma_i'$ cannot be profitable than $\sigma_i$, meaning that
\begin{equation}\label{eq:osd-2}
    V_i((\sigma_i,\sigma_{-i}),\bar \tau)\ge V_i((\sigma_i',\sigma_{-i}),\bar \tau)
\end{equation}
Let $\sigma_i''$ be a strategy equivalent to $\sigma_i'$ during $[-t,-\bar \tau -1)]$ and equals $\sigma_i$ at $-\bar\tau$. Therefore, by Eq.(\ref{eq:osd-2}) $\sigma_i''$ is better than $\sigma_i'$ at $-t$. Introducing this result into Eq.(\ref{eq:osd-1}) yields
\begin{equation}\label{eq:osd-3}
    V_i((\sigma_i'',\sigma_{-i}),t)
\ge V_i((\sigma_i',\sigma_{-i}),t)
\ge V_i((\sigma_i,\sigma_{-i}),t)\notag
\end{equation}
Here $\sigma_i''$ differs from $\sigma_i$ only up to $-\bar \tau-1$. Repeating the process, we obtain $\sigma_i'''$ satisfying 
$$V_i((\sigma_i''',\sigma_{-i}),t)
\ge V_i((\sigma_i,\sigma_{-i}),t)
$$
that differs from $\sigma_i$ only up to $-\bar\tau-2$. Thus, this process eventually yields a strategy $\tilde \sigma_i$ with 
$$V_i((\tilde \sigma_i,\sigma_{-i}), t)>V_i(( \sigma_i,\sigma_{-i}), t)$$
that only differs from $\sigma_i$ at time $t$. This contradicts the premise that  there is no profitable one-shot deviation. 
\end{proof} 

Lemma \ref{lemma:one-shot deviation principle} shows, if there is no profitable one-shot
deviation, then there is no other change in strategy (even very complicated changes or changes at many time points) that can increase payoffs.
This eliminates the hassle of what could happen after time $-t$. OSDP will be our key technique to prove that LR strategies constitute SPE.


\subsection{Subgame Perfect Equilibrium Constraint}
According to Lemma \ref{lemma:one-shot deviation principle}, checking whether an LR strategy can constitute SPE is equivalent to checking whether the LR players have a profitable one-shot deviation from the cooperative plan.\footnote{Regarding deviation, this paper only considers a change on the cooperative plan $x(\cdot)$, but not on the retaliation severeness $k$.}
By definition of LR strategy, a one-shot deviation will take advantage of the opponent and reap an immediate payoff increase. After that, this deviation will be retaliated for a period during which the deviator's payoff decreases. Therefore, the problem of finding SPE condition for LR strategies boils down to comparing the payoff increase and payoff decrease caused by  one-shot deviation at \textit{any} $-t$.

Technically, at any $-t$, if $j$ follows $x(t)$ but $i$ deviates to her payoff maximizing action $\arg\max \nolimits_{a_i} \pi_i(a_i,a)$, then $i$ has an immediate deviation gain quantified as follows. 
	\begin{definition}[Deviation Gain]\label{Def:deviation_gain}
		For any stage game, denote $i$'s maximum deviation gain at action profile $(a,a)$ by $G(a):=\max\nolimits_{a_i} \pi_i(a_i,a)-\pi_i(a,a)$.
	\end{definition}
	After $i$ deviates, both $i$ and $j$ will switch to the defection action $a^N$, thus $i$ may  suffer from a retaliation loss which is quantified in Definition \ref{Def:retaliation_loss}. 
	\begin{definition}[Retaliation Loss]\label{Def:retaliation_loss}
    For any stage game, denote $i$'s retaliation loss at action profile $(a,a)$ by
    $
        	L(a):=\pi_i(a,a)-\pi_i(a^N,a^N).
    $
    \end{definition}
 
	 Without loss of generality, we assume $G(a)$ is strictly increasing on $a\in[a^N,a^*]$ and non-decreasing for $a>a^*$ (same assumption made in \cite{kamada2020revision}).	

	If any player deviates from $x$ at $-t$, by Definition \ref{Def:deviation_gain}, her deviation gain is $G\big(x(t)\big)$, which materializes only if there is no revision opportunity in the remaining time length $t$. This happens with probability $e^{-\lambda t}$, so the \emph{\textbf{expected deviation gain}} at time $-t$ is $G(x(t))e^{-\lambda t}$. According to Definition \ref{Definition:LR_strategy}, the deviation is followed by a mutual retaliation with duration $kt$.
	During this period, at any time $-s$, if there is any revision opportunity, both players will use $a^N$ instead of $x(s)$. 
	This happens with probability density $\lambda e^{-\lambda s}$. Therefore, the
	\emph{\textbf{expected retaliation loss}} is given as an integral {\small{${\int_{-t}^{-t+kt} \big(\pi(x(s))-\pi^N\big)\lambda e^{-\lambda s} \mathrm{d}s}=\int^{t}_{t-kt} L\big(x(s)\big)\lambda e^{-\lambda s} \mathrm{d}s$}}.

	The above expected deviation gain and expected retaliation loss  jointly determines whether the one-shot deviation at $-t$ is profitable. By comparing these two values and applying Lemma \ref{lemma:one-shot deviation principle}, we obtain the following  result.

	



\begin{theorem}[SPE Constraint]\label{IC SPE iff}
An LR strategy $\sigma=\{x(\cdot),k\}$ constitutes SPE, iff at any time $-t\in [-T,0]$,
{{    	\begin{equation}\begin{aligned}\label{LR incentive constraint}
\underbrace{G\big(x(t)\big)e^{-\lambda t}}_{\text{expected deviation gain}} \leq \underbrace{\int^{t}_{t-kt} L\big(x(s)\big)\lambda e^{-\lambda s} \mathrm{d}s}_{\text{expected retaliation loss}}.
\end{aligned}\end{equation}}}
\end{theorem}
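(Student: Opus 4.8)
The strategy is to apply the One-Shot-Deviation Principle (Lemma~\ref{lemma:one-shot deviation principle}): the LR strategy profile $\sigma=\{x(\cdot),k\}$ constitutes SPE if and only if no player has a profitable one-shot deviation at any time $-t$. So the entire argument reduces to enumerating, for each state $-t$, the possible one-shot deviations and showing that ``no profitable deviation'' is equivalent to inequality~(\ref{LR incentive constraint}). First I would split the analysis by which phase the subgame $\Lambda(t)$ is in: either (a) the on-path (collusive) phase, where both players have so far followed $x(\cdot)$, or (b) the retaliation phase, where the defection action $a^N$ is prescribed for the remaining window $(-t,-t+ks]$ following some earlier deviation at $-s$.

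**On-path phase.** Here a one-shot deviation at $-t$ (conditional on a revision opportunity arriving at $-t$, since otherwise the action cannot be changed) means player $i$ picks some $a_i\neq x(t)$ while $\sigma$ otherwise stays intact. The best such deviation is $a_i=\arg\max_{a_i}\pi_i(a_i,x(t))$, yielding immediate gain $G(x(t))$ — but this extra payoff is only realized if no further revision opportunity arrives, i.e.\ with probability $e^{-\lambda t}$, giving expected gain $G(x(t))e^{-\lambda t}$. By the definition of the LR strategy, the deviation triggers the mutual retaliation window $(-t,-t+kt]$: at each $-s$ in that window a revision opportunity (density $\lambda e^{-\lambda s}$) switches both players to $a^N$ instead of $x(s)$, costing $L(x(s))=\pi(x(s))-\pi^N\ge 0$ per such realization (nonnegativity because $x$ is a cooperative plan). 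Integrating gives expected loss $\int_{t-kt}^{t}L(x(s))\lambda e^{-\lambda s}\,\mathrm{d}s$. Crucially, after the window both paths coincide and return to $x(\cdot)$, so there is no further payoff difference — this is exactly where the ``limited'' nature of the retaliation, together with OSDP restricting us to a single-period change, keeps the comparison clean. Hence the on-path one-shot deviation at $-t$ is unprofitable $\iff$ (\ref{LR incentive constraint}) holds at $-t$.

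**Retaliation phase.** I must also check that no player wants to deviate \emph{during} a punishment window, i.e.\ play something other than $a^N$ while in retaliation. Since $a^N$ is the stage-game Nash action, unilaterally deviating from $(a^N,a^N)$ can only weakly decrease the immediate payoff; and because the deviation does not shorten the already-scheduled window (the window length $ks$ was fixed at the triggering time), a one-shot deviation here yields no future benefit either. So such deviations are never profitable, imposing no additional constraint. (A subtle bookkeeping point: when a player's \emph{own} earlier deviation leads her to play $a^N$ — the ``self-defense'' move — one checks that given the opponent plays $a^N$ in the window, $a^N$ is indeed her best one-shot response, which again follows from $(a^N,a^N)$ being a stage Nash equilibrium.) Combining the two phases: profitable one-shot deviations are absent everywhere $\iff$ (\ref{LR incentive constraint}) holds for all $-t\in[-T,0]$, and by Lemma~\ref{lemma:one-shot deviation principle} this is equivalent to SPE.

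**Main obstacle.** The delicate part is not the on-path inequality — that is essentially the expected-payoff bookkeeping already assembled in the text — but rather being careful with the retaliation-phase case and with the boundary/degenerate situations: deviations at states where $t-kt$ is near $0$ or $T$, the measure-zero event of a revision opportunity arriving exactly at the deadline, and ensuring that a one-shot deviation inside a window genuinely cannot ``reset'' or extend the punishment clock under the LR specification. One should also confirm that the OSDP (Lemma~\ref{lemma:one-shot deviation principle}), proved for deviations at a single time point, correctly handles the continuum of states $-t$; this is fine because the argument in that lemma's proof is purely about peeling off the last point of disagreement and does not rely on the time index being discrete. Once these points are dispatched, equivalence with (\ref{LR incentive constraint}) is immediate.
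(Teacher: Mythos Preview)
Your proposal is correct and follows essentially the same route as the paper: apply the one-shot deviation principle (Lemma~\ref{lemma:one-shot deviation principle}), compute the expected deviation gain $G(x(t))e^{-\lambda t}$ and the expected retaliation loss $\int_{t-kt}^{t}L(x(s))\lambda e^{-\lambda s}\,\mathrm{d}s$, and observe that after the retaliation window both continuation paths coincide so no further terms enter. Your explicit treatment of the retaliation-phase case (no profitable deviation from $a^N$ because $(a^N,a^N)$ is a stage-game Nash equilibrium) is a point the paper leaves largely implicit in its main-text sketch, but is needed for a complete argument and is handled correctly.
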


The foundation of the proof is one-shot deviation principle tailored to revision games. Based on this, we analyzed the payoff increase (i.e., expected deviation gain) and the payoff decrease (i.e., expected retaliation loss) caused by one-shot deviation at every time. The equilibrium condition in Theorem \ref{IC SPE iff} is finally obtained by comparing these two kinds of payoff changes and applying one-shot deviation principle.  Theorem 1 says, for an LR strategy to constitute SPE, the expected deviation gain for the one-shot deviation at $-t$ should be no larger than the expected retaliation loss for it.

We emphasize that, for an LR strategy, if Eq. (\ref{LR incentive constraint}) holds for \textit{every} time $-t$, then there exists no profitable one-shot deviation, or equivalently, this LR strategy admits SPE. Specifically, for any fixed $-t$, Eq. (\ref{LR incentive constraint}) does not need to contain any time after $-t+kt$. This is  because for LR,
the one-shot deviation at $-t$ only accounts for payoff changes from $-t$ to $-t+kt$. Any payoff changes after $-t+kt$ are caused by other deviations, which are omitted according to OSDP.


A symmetric profile of LR strategies satisfying these constraints is a \textit{robust SPE, in the sense that even when there is an intentional or unintentional deviation, players can revert to mutual cooperation}.
For a verification of Theorem \ref{IC SPE iff}, please refer to the Appendix.



\section{Equilibrium Computation}\label{sec LR_k plan}
Based on Theorem \ref{IC SPE iff}, to make LR strategies satisfy the SPE constraint, it is equivalent to satisfy Eq.(\ref{LR incentive constraint}) for every time point $-t$. In this section, we will provide a computational method to derive the equilibrium LR strategies governed by Theorem \ref{IC SPE iff}. This method will characterize the cooperative plan for equilibrium LR strategies
in an explicit form.

	
\subsection{Monotone and Piecewise Constant Plan} \label{sec optimality}
We concentrate on LR strategies with cooperative plan $x$ that is in the form of monotone and piecewise constant (MPC) functions. 
For ease of notation, define $\kappa=1-k$.
To derive an MPC plan, divide time span $(-T,0]$ into many small slots: {\small{$(-T,-\kappa T]$, $(-\kappa T, -\kappa^2 T], \cdots, (-{{\kappa}^{n-1}}T, -{{\kappa}^n}T], \cdots$}}
For the $n$-th slot, assign the cooperative plan $x$ with a constant action $a_n$. 
 
{\small{	\begin{equation}\label{Eq:plan}
    	x(t)=\left\{\begin{array}{ll}
    	a_1,& -t\in(-T,-\kappa T]\\
    	a_2,& -t\in(-\kappa T,-\kappa^2 T]\\
    	\vdots &\\
    	a_n,& -t\in(-{{\kappa}^{n-1}}T, -{{\kappa}^n}T]\\
    	a_{n+1},& -t\in(-{{\kappa}^n}T, -{{\kappa}^{n+1}}T]\\    	
    	\vdots &\\
    	\end{array}\right.
	\end{equation}}}We focus on monotonically decreasing MPC plans.
	$a_n$ starts at time {\small{$-{{\kappa}^{n-1}}T$}} and ends at time {\small{$-{{\kappa}^n}T$}}, lasting for a duration {\small{$k\kappa^{n-1}T$}}. This duration shrinks as $n$ increases, meaning that \textit{actions change more frequently as the deadline approaches}. Figure \ref{fig:constant_func} shows an illustration of MPC plan.


	The next step is to satisfy Eq.(\ref{LR incentive constraint}) for each $a_n$ by induction. According to Eq.(\ref{Eq:plan}), any deviation from $a_n$ must be at a time {\small{$-t\in(-{{\kappa}^{n-1}}T, -{{\kappa}^n}T]$}}. Then a mutual retaliation lasts for a duration $kt$. The end of retaliation is $-t+kt=-\kappa t$, which surpasses the end of $a_n$, i.e., {\small{$-\kappa t \ge -{{\kappa}^n}T$}}. Therefore, the constraint for $a_n$ is decomposed as follows:  
{\small{   \begin{equation}
    \begin{aligned}\label{eq.a_i_incentive_constraint}
        G(a_{n})e^{-\lambda t}  \leq\int^{t}_{{{\kappa}^{n}}T} L(a_{n})\lambda e^{-\lambda s} \mathrm{d}s
        +\int^{{{\kappa}^{n}}T}_{\kappa t} L(a_{n+1})\lambda e^{-\lambda s} \mathrm{d}s.
    \end{aligned}  \end{equation}}}The first integral calculates the expected retaliation loss for $[-t, {-{{\kappa}^{n}}T}]$, when the planned action is still $a_n$.  The second integral is the expected retaliation loss for $[{-{{\kappa}^{n}}T}, -\kappa t]$, when the planned action already becomes $a_{n+1}$.

	\begin{figure}[h]
	    \centering
    	    \includegraphics[scale=0.45]{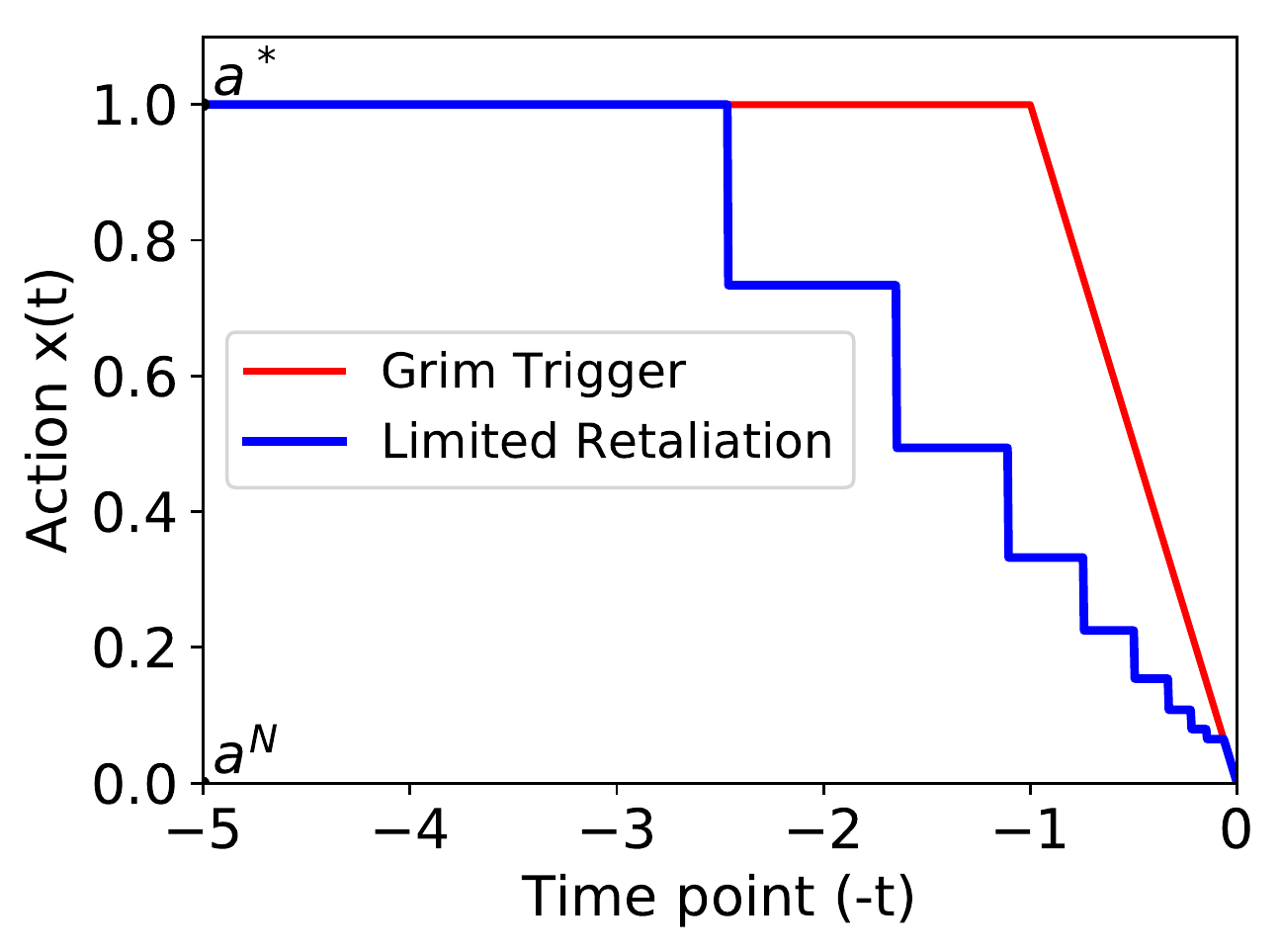}
	    \caption{Example of MPC cooperative plan $x(t)$. }
	    \label{fig:constant_func}
	\end{figure}
\subsection{Inductive SPE Constraint Satisfaction}
For an LR strategy to constitute SPE, the 
Constraint Eq.(\ref{eq.a_i_incentive_constraint}) should hold for every $a_n,$ where $n\ge 1$.
We do this constraint satisfaction by backward induction. The two components of induction are recurrence relation and terminal condition.

\subsubsection{Recurrence Relation}
For an arbitrary $a_n$, Eq.(\ref{eq.a_i_incentive_constraint})  should hold for bounds $-t\to-{{\kappa}^{n-1}}T$ and $-t=-{{\kappa}^n}T$. 
Replacing $-t$ with $-{{\kappa}^n}T$ gives
\begin{equation}\label{upperboundIC}
   G(a_{n})e^{-\lambda {{\kappa}^n}T}  \leq\int^{{{\kappa}^{n}}T}_{ {{\kappa}^{n+1}}T} L(a_{n+1})\lambda e^{-\lambda s} \mathrm{d}s.
\end{equation}
%

Denote $\tau(n+1)=k\kappa^n T$ as the length of the $(n+1)$-th time slot. Expanding the integral in the inequality, we have{\small{\[G\left( {{a_n}} \right){e^{ - \lambda {\kappa ^n}T}} \le L\left( {{a_{n + 1}}} \right)\left( {{e^{ - \lambda {\kappa ^{n + 1}}T}} - {e^{ - \lambda {\kappa ^n}T}}} \right).\]}}
Dividing both sides by ${e^{ - \lambda {\kappa ^{n + 1}}T}}$ gives{\small{$$G\left( {{a_n}} \right)\frac{{{e^{ - \lambda {\kappa ^n}T}}}}{{{e^{ - \lambda {\kappa ^{n + 1}}T}}}} \le L\left( {{a_{n + 1}}} \right)\left( {1 - \frac{{{e^{ - \lambda {\kappa ^n}T}}}}{{{e^{ - \lambda {\kappa ^{n + 1}}T}}}}} \right), $$}}
which is equivalent to 
$$G\left( {{a_n}} \right){e^{(\kappa  - 1)\lambda {\kappa ^n}T}} \le L\left( {{a_{n + 1}}} \right)\left( {1 - {e^{(\kappa  - 1)\lambda {\kappa ^n}T}}} \right).$$
Replacing $\kappa- 1$ by $k$ and then replacing $k{\kappa ^n}T$ by $\tau \left( {n + 1} \right)$, we have the \textit{recurrence relation} as follows.

\begin{equation}\label{Eq:recurrence_relation}
G(a_n)e^{-\lambda\tau(n+1)}
\leq L(a_{n+1})\big(1-e^{-\lambda\tau(n+1)} \big).
\end{equation}
Recall that $
e^{-\lambda\tau(n+1)}
$ is the probability no revision opportunity arrives in the $(n+1)$-th slot. Thus the LHS is the expected deviation gain in this slot. 
Similarly, the RHS represents the expected retaliation loss in the $(n+1)$-th slot. 
Based on Eq.(\ref{Eq:recurrence_relation}), we can generate each $a_n$ inductively. 

\subsubsection{Terminal Condition}
For the terminal action, consider when time approaches the deadline, i.e., $n = c$ where $c$ is large enough. We call $(-\kappa^c T, 0]$ the ultimate slot, and $(-\kappa^{c-1} T, -\kappa^c T]$ the penultimate slot. For the penultimate slot, the planned action is $a_c$, which should satisfy Eq.(\ref{eq.a_i_incentive_constraint}). Substituting the left time bound $-\kappa^{c-1} T$ into it, we have 
\begin{equation}
    \begin{aligned}\label{eq.a_i_incentive_constraint_lowerbound}
        G(a_{c})e^{-\lambda t}  \leq\int^{\kappa^{c-1} T}_{{{\kappa}^{c}}T} L(a_{c})\lambda e^{-\lambda s} \mathrm{d}s.
    \end{aligned}  
\end{equation}
Let $\tau(c)=k\kappa^{c-1} T$ denote the length of the penultimate slot. Basic calculation of this integral:
\begin{equation}\label{Eq:constraint_penultimate_action}
\begin{aligned}
G(a_c)e^{-\lambda\tau(c)}
\leq L(a_{c})\big(1-e^{-\lambda\tau(c)} \big).
\end{aligned}
\end{equation}
This inequality is similar to Eq.(\ref{Eq:recurrence_relation}). But the difference here is that on both sides, the variables are $a_c$. Then the feasible range of $a_c$ can be obtained by solving this inequality.

\subsection{Computing Cooperative Plan for $t \rightarrow 0$}
\subsubsection{Limit Problem at the Deadline}
In the main text, to compute a cooperative action for LR strategy when time is close to the deadline, we define the ultimate slot and penultimate slot, and then set the penultimate slot as that for the terminal action. 
The reason for defining such two adjacent final slots is that:
when time is extremely close to the deadline $0$, in Eq.9, the upper limit and the lower limit of the integral become so close, which makes Eq.9 have no solution.  It is worth noting that the limit problem is a common challenge in many disciplines, which often makes the mathematical tools invalid and makes things  ugly.

To solve this problem in revision games, we (\textbf{I}) find the last $–t$ which is close enough to $0$ but still makes Eq.9 solvable. We call this slot the \textit{penultimate slot}, implying that this is not the hard \textit{ultimate} slot. This \textit{penultimate slot} is the actual terminal condition for our backward induction. (\textbf{II}) However, to make the definition of plan $x$ complete, we introduce the \textit{ultimate} slot back in Eq.11. Here we relax the equilibrium constraint only for the \textit{ultimate} slot, and make it approximately solvable.

\subsubsection{Alternative Solution with Continuous Function}
Besides the approximate equilibrium concept to calculate the plan $x(t)$ for this very short slot, here we discuss about another approach for the incentive constraint satisfaction in the time limit. If $c$ is large enough, this period is very close to the limit and its length is a very small quantity where the piecewise constant plan degenerates to a time-continuous differential plan. Thus we can borrow the grim trigger plan for this very short slot. According to [Kamada and Kandori 2020b], the gradient of this plan $x_g$ is given as 
\begin{equation}\label{eq:GT equation}
\frac{d x_g}{d t}=\frac{\lambda\cdot\left(G(x_g)-L(x_g)\right)}{D'(x_g)}, \quad \forall -t\in (-\kappa^c T, 0].
\end{equation}
Therefore, as long as the grim trigger strategy contains an action $a_c=x_g(\kappa^{c-1} T)$ satisfying the constraint in Eq.(8), we can use this $a_c$ as the terminal action and construct a piecewise constant plan. Thereby, the LR strategy can be obtained.

\subsection{Limited Retaliation in Subgame Perfect Equilibrium}
Considering the recurrence relation together with the terminal action, we directly have the following theorem. 
\begin{theorem}\label{th1}
A profile of Limited Retaliation strategies $\sigma=\{x(\cdot),k\}$ is a subgame perfect equilibrium (SPE), if \\(1) Plan $x$ has the piecewise constant form as Eq.(\ref{Eq:plan}); \\(2) The actions given by $x$ follow recurrence relation Eq.(\ref{Eq:recurrence_relation}); \\(3) There exists an action given by $x$ satisfying Eq.(\ref{Eq:constraint_penultimate_action}).
\end{theorem}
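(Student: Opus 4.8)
The plan is to derive Theorem~\ref{th1} from the one-shot characterization already in hand. By Lemma~\ref{lemma:one-shot deviation principle} and Theorem~\ref{IC SPE iff}, a symmetric profile of LR strategies is SPE if and only if the incentive inequality~\eqref{LR incentive constraint} holds at \emph{every} time $-t\in[-T,0]$, so it suffices to show that conditions (1)--(3) force~\eqref{LR incentive constraint} everywhere. The role of the monotone piecewise-constant form~\eqref{Eq:plan} will be to collapse this continuum of constraints onto the slot-indexed family, and the role of the recurrence relation~\eqref{Eq:recurrence_relation} and the terminal inequality~\eqref{Eq:constraint_penultimate_action} will be to certify that family.

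First I would localize the constraint to a single slot. Fix any $-t$ lying strictly inside the $n$-th slot, so $x(t)=a_n$ and $t\in(\kappa^{n}T,\kappa^{n-1}T)$. A one-shot deviation at $-t$ is retaliated on the window $(-t,-\kappa t]$, and since $\kappa t\in(\kappa^{n+1}T,\kappa^{n}T)$ this window meets \emph{exactly} the two consecutive slots $n$ and $n+1$: the cooperative plan equals $a_n$ on $[-t,-\kappa^{n}T]$ and $a_{n+1}$ on $[-\kappa^{n}T,-\kappa t]$. Hence~\eqref{LR incentive constraint} specializes to the two-integral form~\eqref{eq.a_i_incentive_constraint}. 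Multiplying that inequality by $e^{\lambda t}$ rewrites it as $G(a_n)\le g_n(t)$ with $g_n(t)=L(a_n)\bigl(e^{\lambda(t-\kappa^{n}T)}-1\bigr)+L(a_{n+1})\bigl(e^{\lambda k t}-e^{\lambda(t-\kappa^{n}T)}\bigr)$. The key computation is that $g_n$ is non-decreasing on $[\kappa^{n}T,\kappa^{n-1}T]$: its derivative is $\lambda e^{\lambda(t-\kappa^{n}T)}\bigl(L(a_n)-L(a_{n+1})\bigr)+\lambda k\,L(a_{n+1})e^{\lambda k t}$, and both summands are nonnegative because the plan is monotonically decreasing, so $a_n\ge a_{n+1}$ and hence $L(a_n)\ge L(a_{n+1})\ge 0$. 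Consequently the constraint is tightest at the near-deadline endpoint $t=\kappa^{n}T$, where $g_n(\kappa^{n}T)=L(a_{n+1})\bigl(e^{\lambda\tau(n+1)}-1\bigr)$, and $G(a_n)\le g_n(\kappa^{n}T)$ is exactly the recurrence relation~\eqref{Eq:recurrence_relation} after multiplying through by $e^{\lambda\tau(n+1)}$. Thus condition (2) alone certifies~\eqref{LR incentive constraint} throughout every slot that possesses a successor, and the same monotonicity covers the initial instant $-T$, the degenerate far edge of the first slot.

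What remains, and what I expect to be the main obstacle, is the behaviour near the deadline. In the penultimate slot the retaliation window spills into the ultimate slot $(-\kappa^{c}T,0]$, where the piecewise-constant description breaks down and the bounds of the integral in~\eqref{LR incentive constraint} collapse onto each other, so the monotonicity reduction above can no longer be anchored at the near-deadline endpoint. Here I would invoke condition (3): the terminal inequality~\eqref{Eq:constraint_penultimate_action} is precisely~\eqref{LR incentive constraint} evaluated at the far endpoint $-\kappa^{c-1}T$ of the penultimate slot, at which the successor integral vanishes; and I would fill the ultimate slot with a short grim-trigger arc obeying~\eqref{eq:GT equation}, whose own continuous equilibrium property handles deviations inside $(-\kappa^{c}T,0]$. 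Since the arc's length $\kappa^{c}T\to 0$ as $c$ grows, the residual constraint in the penultimate slot is met in the approximate-equilibrium sense discussed in the ``$t\to 0$'' analysis. Combining the two parts gives~\eqref{LR incentive constraint} at all $-t$, and Theorem~\ref{IC SPE iff} then delivers SPE. The slot-boundary bookkeeping (half-open intervals, the single points $-\kappa^{m}T$) is routine; the only delicate step is making this ``vanishing ultimate slot'' argument precise, and everything else rests on the elementary monotonicity estimate for $g_n$.
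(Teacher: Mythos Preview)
Your argument is correct and follows the same skeleton as the paper---reduce to the pointwise constraint of Theorem~\ref{IC SPE iff}, localize to each slot via the two-integral form~\eqref{eq.a_i_incentive_constraint}, and appeal to the recurrence~\eqref{Eq:recurrence_relation} plus the terminal inequality~\eqref{Eq:constraint_penultimate_action}. The paper's own proof is in fact a single sentence invoking mathematical induction; the derivation preceding the theorem merely \emph{evaluates}~\eqref{eq.a_i_incentive_constraint} at the endpoint $t=\kappa^{n}T$ to obtain~\eqref{Eq:recurrence_relation} and at $t=\kappa^{c-1}T$ to obtain~\eqref{Eq:constraint_penultimate_action}, without arguing that these endpoint checks dominate the interior of each slot. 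Your monotonicity computation for $g_n$ (using $L(a_n)\ge L(a_{n+1})\ge 0$ from the decreasing MPC assumption) is exactly the missing justification for why the near-deadline endpoint is binding, so your proof is strictly more complete on this point. Your treatment of the ultimate slot---patching with a grim-trigger arc and passing to the approximate equilibrium of Corollary~\ref{corollary1}---matches what the paper does in its ``$t\to 0$'' discussion, and you are right to flag it as the only genuinely delicate step.
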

This theorem is just from the principle of mathematical induction  that when the recurrence relation and the initial term (terminal condition, in our case) are given, every member in the sequence can be generated inductively. 

When time is extremely close to the deadline, the terminal condition Eq.(\ref{Eq:constraint_penultimate_action}) becomes more difficult to satisfy, and it degenerates to a limit problem. To see this, divide both sides by $e^{-\lambda\tau(c)}$, then the RHS is decreasing in $c$. To tackle this limit problem, we can use the \emph{approximate} equilibrium solution. The terminal condition is relaxed to
\begin{equation}\label{lowerboundIC_no_integral_2_approximiate}
\begin{aligned}
G(a_c)e^{-\lambda\tau(c)}
\leq L(a_{c})\big(1-e^{-\lambda\tau(c)} \big)+\epsilon,
\end{aligned}
\end{equation}
where $\epsilon$ is a small constant, but $c$ is not necessarily extremely large. Eq.(\ref{lowerboundIC_no_integral_2_approximiate}) is easier to be satisfied. 

\begin{corollary}\label{corollary1}
A profile of LR strategies strictly satisfying conditions (1) and (2) but approximately satisfying condition (3) in Theorem \ref{th1} is an approximate SPE.
\end{corollary}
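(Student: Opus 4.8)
The plan is to reduce Corollary \ref{corollary1} to the same inductive argument that established Theorem \ref{th1}, but tracking the slack $\epsilon$ introduced in the relaxed terminal condition Eq.(\ref{lowerboundIC_no_integral_2_approximiate}). First I would recall that by Lemma \ref{lemma:one-shot deviation principle} (OSDP), it suffices to bound, at every time $-t$, the profitable one-shot deviation, i.e., the quantity
\begin{equation*}
\Delta(t):=G\big(x(t)\big)e^{-\lambda t}-\int_{t-kt}^{t} L\big(x(s)\big)\lambda e^{-\lambda s}\,\mathrm{d}s,
\end{equation*}
and show $\Delta(t)\le\epsilon$ uniformly in $t$ (this is exactly the $\epsilon$-SPE / approximate SPE notion: no player can gain more than $\epsilon$ by deviating). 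Conditions (1) and (2) are assumed to hold strictly, so for every slot except possibly the penultimate one, the decomposition leading to Eq.(\ref{eq.a_i_incentive_constraint}) and the recurrence Eq.(\ref{Eq:recurrence_relation}) give $\Delta(t)\le 0$ by the same calculation as in the proof of Theorem \ref{IC SPE iff} and Theorem \ref{th1}; nothing new is needed there.

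The remaining work concerns the penultimate slot, where only the relaxed inequality Eq.(\ref{lowerboundIC_no_integral_2_approximiate}) holds. Here I would substitute the worst-case time bound $-t\to-\kappa^{c-1}T$ into Eq.(\ref{eq.a_i_incentive_constraint}) exactly as in the derivation of Eq.(\ref{Eq:constraint_penultimate_action}), carry out the elementary integral, and observe that the resulting deviation-minus-loss expression is precisely $G(a_c)e^{-\lambda\tau(c)}-L(a_c)\big(1-e^{-\lambda\tau(c)}\big)$, which by Eq.(\ref{lowerboundIC_no_integral_2_approximiate}) is at most $\epsilon$. One must also check the monotonicity in $t$ within the slot: since $G$ is non-decreasing in the action and $x$ is constant ($=a_c$) on the penultimate slot while $e^{-\lambda t}$ is monotone in $t$, the left endpoint $-\kappa^{c-1}T$ is indeed the binding one, so $\Delta(t)\le\epsilon$ for all $-t$ in that slot. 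For the ultimate slot $(-\kappa^c T,0]$ — which has no cooperative action defined by the recurrence — I would note that any deviation there is retaliated by a vanishingly short phase as $t\to0$, so $\Delta(t)\to G(a_c)e^{0}-0$ is bounded; more carefully, one either absorbs this into the same $\epsilon$ by choosing $c$ so that the residual deviation gain on $(-\kappa^c T,0]$ is $O(\epsilon)$, or one invokes the continuous grim-trigger patch of Eq.(\ref{eq:GT equation}) as already discussed in the text, under which the ultimate-slot constraint holds exactly.

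Assembling these pieces: on every slot the maximal one-shot-deviation payoff is at most $\epsilon$, hence by OSDP no strategy (however complex) can improve any player's subgame payoff by more than $\epsilon$, which is the definition of approximate SPE; this yields Corollary \ref{corollary1}. The main obstacle I anticipate is handling the genuine ultimate slot $(-\kappa^c T,0]$ cleanly, since Eq.(\ref{lowerboundIC_no_integral_2_approximiate}) as stated only relaxes the penultimate constraint and the text leaves the ultimate slot either to the $\epsilon$-relaxation or to the continuous patch; I would make explicit which convention is in force and verify that the deviation gain accumulated strictly inside $(-\kappa^c T,0]$ is controlled by the same $\epsilon$ (or folded into a slightly larger constant $\epsilon'=\epsilon+o(1)$ as $c$ grows), so that the approximate-SPE bound is uniform over the whole horizon $[-T,0]$ rather than only over $[-T,-\kappa^c T]$.
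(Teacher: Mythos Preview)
Your proposal is correct and, in fact, supplies considerably more detail than the paper itself: the paper states Corollary~\ref{corollary1} without proof, treating it as an immediate consequence of Theorem~\ref{th1}'s inductive argument once the terminal condition Eq.(\ref{Eq:constraint_penultimate_action}) is replaced by the relaxed version Eq.(\ref{lowerboundIC_no_integral_2_approximiate}). Your slot-by-slot OSDP analysis, including the explicit identification of the ultimate-slot ambiguity and the two ways the paper resolves it (the $\epsilon$-relaxation versus the continuous grim-trigger patch of Eq.(\ref{eq:GT equation})), is exactly the reasoning the paper leaves implicit.
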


\begin{definition}
With a given $k$, if a plan $x$ can make a profile of LR strategies constitute SPE, we say plan $x$ supports SPE. 
\end{definition}

Note that a plan such that $\forall t, x(t) = a^N$ trivially satisfies Eq.(\ref{Eq:recurrence_relation}) and Eq.(\ref{Eq:constraint_penultimate_action}). Thus an LR strategy using this trivial plan can support SPE. However, this one cannot sustain any level of cooperation. For LR strategies to constitute a cooperative SPE, we have the following requirements. 
\begin{corollary}\label{corollary2}
A cooperative plan $x$ in the form of Eq.(\ref{Eq:plan}) supports SPE, only if for all $n$, $a_n\geq a^N$. 
\end{corollary}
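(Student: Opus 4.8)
My plan is to obtain Corollary~\ref{corollary2} as a short consequence of the definition of a cooperative plan together with the SPE conditions of Theorem~\ref{th1}. First, recall that a plan $x(\cdot)$ is \emph{cooperative} exactly when $x(t)\ge a^N$ holds at every $-t$. Since the MPC form in Eq.(\ref{Eq:plan}) partitions $(-T,0]$ into the slots on which $x$ is constant and equal to $a_1,a_2,\dots$ in turn, the requirement ``$x(t)\ge a^N$ for all $-t$'' is literally the requirement ``$a_n\ge a^N$ for all $n$'', so under this reading the ``only if'' is immediate. The substantive point I actually want to establish, consistent with the surrounding discussion, is the contrapositive in operational form: an MPC plan with $a_m<a^N$ for some $m$ cannot support a \emph{cooperative} SPE, i.e.\ one that sustains a positive level of cooperation, as opposed to the trivial all-$a^N$ plan discussed just before the corollary.

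For that direction I would argue by contradiction. Suppose $x$ supports a cooperative SPE but $a_m<a^N$ for some $m$. Since the SPE is not the trivial all-$a^N$ one and the MPC plan is monotonically decreasing, $a_1$ is the largest action, so there is a first slot $j$ with $a_j>a^N$, and because the sequence is decreasing with $a_j>a^N>a_m$ we must have $j<m$. Now invoke the recurrence Eq.(\ref{Eq:recurrence_relation}): its left-hand side $G(a_n)e^{-\lambda\tau(n+1)}$ is non-negative, since a one-shot deviation to one's own best response cannot lower one's stage payoff and $G(a^N)=0$ at the stage-game Nash $(a^N,a^N)$; hence the recurrence forces $L(a_{n+1})\ge 0$, that is $\pi(a_{n+1})\ge\pi^N$. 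Starting at $n=j$, strict monotonicity of $G$ on $[a^N,a^*]$ gives $G(a_j)>0$, which makes Eq.(\ref{Eq:recurrence_relation}) strict at that step, hence $L(a_{j+1})>0$, hence (on the admissible action range) $a_{j+1}>a^N$; iterating the same step along $j+1,j+2,\dots$ propagates $a_n>a^N$ for every $n\ge j$, and in particular $a_m>a^N$, contradicting $a_m<a^N$. The terminal condition Eq.(\ref{Eq:constraint_penultimate_action}), where the same action appears on both sides, is handled identically; equivalently, one may run the whole argument directly on the incentive constraint Eq.(\ref{LR incentive constraint}) at times near the slot boundaries, which is how Theorem~\ref{IC SPE iff} and Theorem~\ref{th1} were derived.

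The step I expect to be the main obstacle is making this propagation airtight across the boundary of the interval $[a^N,a^*]$. The standing assumptions give strict monotonicity of $\pi$ and of $G$ only on $[a^N,a^*]$, with only one-sided conditions for $a>a^*$, so I must verify that ``$L(a)\ge 0$ together with the decreasing structure of the plan'' genuinely confines the relevant actions to the cooperative side of $a^N$ rather than to some region below $a^N$ where $\pi$ could a priori dip and then rise again; this is exactly the place to invoke that, on the admissible action range, $L(a)\ge 0$ with equality characterizing $a=a^N$, so that a strict inequality on the left of Eq.(\ref{Eq:recurrence_relation}) really does force the next action strictly above $a^N$. The remaining pieces --- confirming $G(a^N)=0$ from the Nash property of $(a^N,a^N)$, and unwinding the geometric-type integrals that turn Eq.(\ref{LR incentive constraint}) into Eq.(\ref{Eq:recurrence_relation}) --- are routine and have already been carried out in the text.
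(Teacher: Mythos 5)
Your proposal is correct and rests on the same key step as the paper's own proof: dividing the recurrence relation Eq.(\ref{Eq:recurrence_relation}) by the exponential factor, using $G(a_n)\ge 0$ (with $G(a^N)=0$ from the Nash property of $(a^N,a^N)$) to force $L(a_{n+1})\ge 0$, and then using the behavior of $L$ around $a^N$ to conclude $a_{n+1}\ge a^N$ --- including the same reliance on $L(a)<0$ for $a<a^N$, which you flag as the delicate point and which the paper simply asserts via monotonicity of $L$ below $a^*$. The extra contradiction/propagation scaffolding (first cooperative slot $j$, strict inequalities, restriction to a nontrivial SPE, monotonicity of the plan) is harmless but unnecessary, since the sign argument applies to every slot directly, which is exactly what the paper does.
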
 
\begin{proof}
Eq.(\ref{Eq:recurrence_relation}) can identify the equilibrium constraint for the $(n+1)$-th slot. Divide both sides by $e^{-\lambda \tau(n+1)}$, we have 
$G(a_n)
\leq L(a_{n+1})\big(e^{\lambda\tau(n+1)}-1 \big).$
$L(a)$ is monotonically increasing in $a < a^*$, and if $a < a^N$, $L(a) < L(a^N) = 0$. $G(a) \geq 0$ for any $a$. To make Eq.(\ref{Eq:recurrence_relation}) hold, it is required that $a_n \geq a^N$ for any $n \geq 1$.    
\end{proof}

\begin{corollary}\label{corollary3}
A cooperative plan $x$ in form of Eq.(\ref{Eq:plan}) supports cooperative SPE, only if the terminal action $a_c> a^N$.
\end{corollary}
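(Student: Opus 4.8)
The plan is to prove the contrapositive by a short backward induction along the recurrence relation Eq.(\ref{Eq:recurrence_relation}). Since $x$ is a cooperative plan, $x(t)\ge a^N$ for every $-t$, so $a_n\ge a^N$ for all $n$ (this is also the content of Corollary \ref{corollary2}); in particular $a_c\ge a^N$. Hence the only case to rule out is $a_c=a^N$, and it suffices to show that this case forces $a_n=a^N$ for \emph{every} $n$ --- i.e.\ $x$ degenerates to the trivial plan $x(t)\equiv a^N$, which sustains no level of cooperation and therefore cannot support a cooperative SPE.

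First I would isolate the one property of the deviation gain that drives everything: for every admissible action $a>a^N$ one has $G(a)>0$. Indeed $G\ge 0$ always (take $a_i=a$ in Definition \ref{Def:deviation_gain}); on $[a^N,a^*]$ the assumed strict monotonicity of $G$ gives $G(a)>G(a^N)\ge 0$; and for $a>a^*$ the assumed non-decrease gives $G(a)\ge G(a^*)>G(a^N)\ge 0$. (Consistently, $G(a^N)=0$, since $(a^N,a^N)$ is a stage-game Nash equilibrium, so $a^N$ is a best response to $a^N$.) Combined with $a_n\ge a^N$, this yields the key implication: if $G(a_n)=0$ then $a_n=a^N$.

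Then comes the induction, run backward from the penultimate slot. Base case: $a_c=a^N$, the case under scrutiny. Inductive step: if $a_{n+1}=a^N$, then by Definition \ref{Def:retaliation_loss} the retaliation loss is $L(a_{n+1})=L(a^N)=0$, so the right-hand side of Eq.(\ref{Eq:recurrence_relation}) vanishes and the recurrence collapses to $G(a_n)e^{-\lambda\tau(n+1)}\le 0$, hence $G(a_n)\le 0$, hence (as $G\ge 0$) $G(a_n)=0$, hence by the key implication $a_n=a^N$. Iterating from $n=c-1$ down to $n=1$ gives $a_1=\cdots=a_c=a^N$, i.e.\ $x\equiv a^N$ on its whole piecewise-constant body, contradicting the hypothesis that $x$ supports a cooperative SPE. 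Therefore $a_c\ne a^N$, and with $a_c\ge a^N$ we conclude $a_c>a^N$.

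I expect the obstacles here to be only bookkeeping ones. The first is making the implication ``$G(a)=0\Rightarrow a=a^N$'' valid over the \emph{entire} admissible action interval rather than just $[a^N,a^*]$, which is exactly why the case $a>a^*$ is handled separately above and why Corollary \ref{corollary2} (or the cooperative-plan hypothesis) must be invoked to exclude $a_n<a^N$ from the start. The second is the status of the ultimate slot $(-\kappa^cT,0]$: the statement and the induction concern the terminal action $a_c$ of the penultimate slot, and once $a_1,\dots,a_c$ all collapse to $a^N$ the plan carries no cooperation on $(-T,-\kappa^cT]$, so nothing depends on whatever (vanishingly short) continuation is attached on the ultimate slot. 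Everything else is the single-line backward recursion above.
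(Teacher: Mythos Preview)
Your proposal is correct and follows essentially the same approach as the paper: both argue by contrapositive and run a backward induction along the recurrence Eq.(\ref{Eq:recurrence_relation}), using $L(a^N)=0$ to force $G(a_{n})=0$ and hence $a_{n}=a^N$ step by step until the plan collapses to the trivial $x\equiv a^N$. Your version is in fact more careful than the paper's in justifying the implication $G(a)=0\Rightarrow a=a^N$ across the full admissible range and in flagging the ultimate-slot bookkeeping; the paper's proof simply asserts these points.
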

\begin{proof}
Prove by contrapositive. Divide both sides of Eq.(\ref{Eq:recurrence_relation}) by $e^{-\lambda \tau(n)}$, we have 
$G(a_n)
\leq L(a_{n+1})\big(e^{\lambda\tau(n)}-1 \big).$
Let $a_c = a^N$, then $L(a_c)=0$. Then the inequality holds only when $G(a_{c-1})=0$. By the same recurrence relation, this further requires $a_{c-1}=a^N$. Repeat this for all $a_n$, we can see that if $a_c = a^N$, then the only plan satisfying Eq.(\ref{Eq:recurrence_relation}) is that $a_n=a^N$ for all $n$, which is not cooperative. The SPE induced by this trivial plan is also not cooperative.  
\end{proof}

In summary, Theorem \ref{th1} presents the condition for LR strategies to constitute SPE, while Corollaries \ref{corollary2} and \ref{corollary3} show requirements for a cooperative plan to be a good choice for LR strategies and for cooperative LR strategies, respectively. 


\section{Social Welfare Maximization}\label{sec3}
In this section, we show that among all piecewise constant plans, the bounded MPC plans maximize the social welfare. We then devise an optimization method to find them. 


\subsection{Reducing Search Space of Plan Optimization} \label{sec optimality}
We will first prove that MPC plans have the highest social welfare among all piecewise constant plans. The following analysis allows us to only focus on \emph{{bounded plans}} which have upper-bound action $a^*$ and lower-bound action $a^N$. 

\subsubsection{Bounding}
According Corollary \ref{corollary2}, we know that
if a plan $x$ supports SPE, it must satisfy that $\forall t, x(t) \geq a^N$.
Now based on $x$, we construct an outer function $\bar{x}$ as follows.
{\small\begin{equation}\label{Eq:xbar}
\forall t, \bar{x}(t) = \left\{\begin{array}{ll}
a^*, & x(t) > a^*, \\
x(t), & otherwise. \\
\end{array}\right.
\end{equation}}$\bar{x}$ is still piecewise constant. Since it replaces all actions larger than $a^*$ by $a^*$, it must be that $\forall t, x(t)\in [a^N,a^*]$, meaning $\bar x$ is bounded. The bounded plans have the following property which allows us to only concentrate on them.

\begin{lemma} \label{Lemma:bounding}
For every piecewise constant plan $x$ that supports SPE, there exists a bounded piecewise constant plan $\bar x$ which also supports SPE, and has higher social welfare.
\end{lemma}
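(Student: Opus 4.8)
## Proof Proposal for Lemma \ref{Lemma:bounding}

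The plan is to show that the outer function $\bar x$ defined in Eq.(\ref{Eq:xbar}) inherits the SPE-supporting property from $x$, and that the replacement of each above-$a^*$ action by $a^*$ can only increase (weakly) the per-slot social welfare. The key structural fact I would exploit is that the relevant quantities in the recurrence relation Eq.(\ref{Eq:recurrence_relation}) and the terminal condition Eq.(\ref{Eq:constraint_penultimate_action}) are all built from $G(\cdot)$, $L(\cdot)$, and the payoff $\pi(\cdot)$, whose monotonicity behavior on $[a^N,a^*]$ versus $(a^*,\infty)$ is exactly what makes the clipping harmless.

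First I would verify that $\bar x$ still satisfies the recurrence relation. Fix a slot index $n$ and write $a_n, a_{n+1}$ for the actions of $x$ and $\bar a_n, \bar a_{n+1}$ for those of $\bar x$; note $\bar a_n = \min\{a_n, a^*\}$. There are four cases according to whether each of $a_n, a_{n+1}$ exceeds $a^*$. The crucial observations are: (i) $G$ is non-decreasing for $a > a^*$ and $G(a^*) = \max_a\pi_i(a,a^*) - \pi(a^*)$, so clipping $a_n$ down to $a^*$ does not increase the left-hand side $G(\bar a_n)e^{-\lambda\tau(n+1)}$ — indeed $G(\bar a_n) = G(\min\{a_n,a^*\}) \le G(a_n)$ since $G$ is non-decreasing throughout $[a^N,\infty)$; (ii) $L(a) = \pi(a) - \pi^N$ with $\pi$ non-decreasing for $a > a^*$, so $L(\bar a_{n+1}) = L(\min\{a_{n+1},a^*\})$; here clipping could \emph{decrease} the right-hand side, so I must check this does not break the inequality. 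But if $a_{n+1} > a^*$, then $L(\bar a_{n+1}) = L(a^*) = \max_a \pi(a) - \pi^N \ge L(a_{n+1})$ is \emph{false} in general — rather $\pi(a^*)\ge\pi(a_{n+1})$ is not guaranteed since $\pi$ is only assumed non-decreasing past $a^*$. I need to be careful: the assumption is $\pi$ strictly increasing on $[a^N,a^*]$ and non-decreasing for $a>a^*$, which gives $\pi(a^*)\le\pi(a_{n+1})$ when $a_{n+1}>a^*$, hence $L(a^*)\le L(a_{n+1})$, so clipping the $(n+1)$-st action \emph{reduces} the RHS. This is the main obstacle, and the resolution is that whenever $a_{n+1}>a^*$, the original LR player was "wasting" cooperation beyond the welfare peak; I would argue that clipping $a_{n+1}$ down to $a^*$ simultaneously clips $a_n$ (or leaves it below $a^*$), and since $G(a^*)$ and $L(a^*)$ together still satisfy $G(a^*)e^{-\lambda\tau} \le L(a^*)(1 - e^{-\lambda\tau})$ whenever the original $a_n$ did — because $G(\bar a_n)\le G(a^*)$ and one invokes the fact that $G(a^*)\le L(a^*)$ must already hold for the unclipped plan to satisfy its own constraints further out — the inequality is preserved. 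Making this last link airtight (that the clipped constraint reduces to one already implied by the chain of original constraints) is the delicate step.

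After the recurrence relation, I would handle the terminal condition Eq.(\ref{Eq:constraint_penultimate_action}) by the same case analysis on $a_c$ versus $a^*$: if $a_c \le a^*$ nothing changes; if $a_c > a^*$, then $\bar a_c = a^*$ and both sides involve $G(a^*)$ and $L(a^*)$, and I must show $G(a^*)e^{-\lambda\tau(c)} \le L(a^*)(1-e^{-\lambda\tau(c)})$. Again this follows because $G(a^*)\le G(a_c)$ while the original $a_c$ satisfied its constraint with $L(a_c)$, and since $a_c>a^*$ implies $L(a_c)\le$ ... here I would instead note that one can always \emph{choose} the penultimate slot action of $\bar x$ to be $a^*$ directly, since $G(a^*)e^{-\lambda\tau(c)}\le L(a^*)(1-e^{-\lambda\tau(c)})$ holds for $\tau(c)$ not too small — or simply keep $\bar a_c = \min\{a_c,a^*\}$ and reuse the argument. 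Finally, for the social welfare comparison: $V(\bar\sigma, T)$ versus $V(\sigma,T)$ differs slot by slot only where $x(t) > a^*$, where the integrand changes from $\pi(x(t))e^{-\lambda t}$-type terms to $\pi(a^*)e^{-\lambda t}$-type terms; since $a^* = \arg\max_a \pi(a,a)$, we have $\pi(a^*) \ge \pi(x(t))$ pointwise, so every modified term weakly increases, hence $V(\bar\sigma,T)\ge V(\sigma,T)$. I would state this via Eq.(\ref{Eq:expected_payoff}) applied to both plans and a termwise comparison.

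I expect the main obstacle to be the case $a_{n+1} > a^*$ in the recurrence relation, where clipping shrinks the right-hand side (expected retaliation loss) and could in principle violate the constraint. The resolution I would pursue is to show that this case cannot actually cause a violation: if $a_{n+1} > a^*$ then the constraint for slot $n+1$ itself, namely $G(a_{n+1})e^{-\lambda\tau(n+2)} \le L(a_{n+2})(1 - e^{-\lambda\tau(n+2)})$, combined with monotonicity, forces $G(a^*)$ to be small enough relative to $L(a^*)$; alternatively, and more cleanly, I would restructure $\bar x$ so that once an action is clipped to $a^*$ all earlier actions are also at most $a^*$ (which holds automatically for a \emph{monotonically decreasing} plan — and the lemma is stated for general piecewise constant plans, so I may first reduce to the monotone case or argue the clipping preserves the needed ordering). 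This monotonicity reduction is likely the cleanest route and is where I would concentrate the rigor.
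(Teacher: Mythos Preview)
Your ``main obstacle'' is not an obstacle at all, and it arises from overlooking the definition of $a^*$. Recall that $a^* := \arg\max_a \pi(a,a)$, so $\pi(a^*)\ge\pi(a)$ for \emph{every} $a$; combined with the assumption that $\pi$ is non-decreasing for $a>a^*$, this forces $\pi(a)=\pi(a^*)$ for all $a\ge a^*$. Consequently $L(a)=\pi(a)-\pi^N$ is \emph{maximized} at $a^*$, and clipping $a_{n+1}>a^*$ down to $a^*$ gives $L(\bar a_{n+1})=L(a^*)\ge L(a_{n+1})$ --- the right-hand side weakly \emph{increases}, not decreases. This is exactly the observation the paper uses: $G(\bar x(t))\le G(x(t))$ pointwise (since $G$ is non-decreasing) and $L(\bar x(t))\ge L(x(t))$ pointwise (since $L$ is maximized at $a^*$), so the general SPE constraint Eq.(\ref{LR incentive constraint}) is preserved termwise. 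Your elaborate workarounds --- chaining constraints across slots, or first reducing to the monotone case --- are therefore unnecessary.

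Two further remarks. First, the paper verifies the general constraint Eq.(\ref{LR incentive constraint}) of Theorem~\ref{IC SPE iff} directly, rather than the recurrence relation Eq.(\ref{Eq:recurrence_relation}) and terminal condition Eq.(\ref{Eq:constraint_penultimate_action}); this is both cleaner and more general, since the lemma is stated for arbitrary piecewise constant plans, not just those in the MPC form of Eq.(\ref{Eq:plan}). Second, your proposed fallback of ``first reduce to the monotone case'' would be circular: the monotonization argument (Lemma~\ref{Lemma:monotonicity}) is proved \emph{after} the bounding lemma and relies on the plan already being bounded.
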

\begin{proof}
For notation simplicity, rewrite the expected payoff in Eq.(\ref{Eq:expected_payoff}) by $V(x)$.
Given a piecewise constant plan $x$, we construct $\bar x$ following Eq.(\ref{Eq:xbar}). First prove $V \big( \bar{x} \big) \ge V \big(x \big)$. Recall $a^*$ results in the maximum stage game payoff. Since $\bar{x}$ always chooses the same or a better action at any time, according to the monotonicity of $\pi(a)$, it must be that $\forall t, \pi(\bar{x}(t)) \geq \pi(x(t))$. The expected payoff $V$ is adding up $\pi(t)$ over all $t$. 
Therefore, $V(\bar{x}) \ge V(x)$. The social welfare is the sum of all players' expected payoffs, thus $\bar x$ leads to a higher social welfare. Then, we prove that $\bar{x}$ supports SPE. 
On the one hand, by Eq.(\ref{Eq:xbar}), at any time $t$, the two actions $\ \bar{x}(t) \leq x(t)$. Since $G(a)$ is strictly increasing on $[a^N,a^*]$, we have $G(\bar{x}(t))\le G(x(t)) $ for any $t$. On the other hand, $L(a)$ is maximized by $a^*$, then by Eq.(\ref{Eq:xbar}) we know $L(x(t))\leq L(\bar{x}(t))$ for any $t$.
The premise indicates that $x(t)$ satisfies equilibrium constraint Eq.(\ref{LR incentive constraint}), where the LHS is decreasing as $G$ decreases and the RHS is increasing as $L$ increases. Replacing $x$ by $\bar x$ still satisfies Eq.(\ref{LR incentive constraint}), which means $\bar x$ also supports SPE.
\end{proof}


\subsubsection{Monotonization}  
The second step will further refine the above result and show that, among all the bounded piecewise constant plans, 
we can focus on the bounded monotone and piecewise constant (bounded MPC) plans. 

    
Consider a bounded piecewise constant plan $\bar x$ which supports SPE but is not monotonically decreasing over $(-T,0]$. Assume two numbers $i<j$. Assume in the action sequence of plan $\bar x$, $a_j $ is the first action which is larger than some actions before it. This means $x$ is not monotonically decreasing. Specifically, assume all actions after $a_i$ and before $a_j$ are smaller than $a_j$. According to Eq.(\ref{Eq:plan}), the planned slots for $a_{i}$ and $a_{j}$ are $(-\kappa^{i-1}T,-\kappa^{i}T]$ and $(-\kappa^{j-1}T,-\kappa^{j}T]$, respectively. Thus the time-span after $a_{i}$ and before $a_{j}$ is $(-\kappa^{i}T,-\kappa^{j-1}T]$.    
Based on the above introduction of the non-monotone $x$, we construct another plan $\hat{x}$ as follows.
{\small
\begin{equation}\label{Eq:hat_x}
\hat{x}(t)=\left
\{\begin{array}{l}
a_j, ~~ -t\in (-\kappa^{i}T,-\kappa^{j-1}T] \\
\bar x(t), ~~otherwise
\end{array}
\right.
\end{equation}}All actions in $\hat{x}$ are from $\bar x$ which is bounded, then $\hat{x}$ is also bounded. Moreover, $\hat{x}$ replaces all the `non-monotone' actions $\hat x(t)$ for $t\in(-\kappa^{i}T,-\kappa^{j-1}T]$ with a constant $a_j$, hence it restores the monotonicity. Then Lemma \ref{Lemma:monotonicity} shows it suffices to only consider the bounded MPC plans.


    
\begin{lemma}
\label{Lemma:monotonicity}
For every bounded piecewise constant plan $\bar x$ which supports SPE, there always exits a bounded monotone piecewise constant (bounded MPC) plan $\hat x$, which also supports SPE and leads to higher social welfare than $\bar x$.
\end{lemma}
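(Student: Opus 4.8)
The plan is to mirror the structure of the proof of Lemma~\ref{Lemma:bounding}: given a bounded piecewise constant plan $\bar x$ that supports SPE but is not monotone, build the ``monotonized'' plan $\hat x$ via Eq.~(\ref{Eq:hat_x}) and verify two things, namely that $V(\hat x)\ge V(\bar x)$ (higher social welfare) and that $\hat x$ still satisfies the equilibrium constraint Eq.~(\ref{LR incentive constraint}) at every time $-t$. The construction in Eq.~(\ref{Eq:hat_x}) raises every ``dip'' below $a_j$ up to $a_j$ on the span $(-\kappa^i T,-\kappa^{j-1}T]$; I would first note that since all actions raised are $\le a_j\le a^*$ and $\bar x$ is bounded, $\hat x$ is again bounded and piecewise constant, and that one such operation strictly reduces the number of monotonicity violations, so finitely (or countably, handled by a limiting argument) many such steps produce a bounded MPC plan.

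For the welfare comparison, $\hat x(t)\ge \bar x(t)$ pointwise by construction, so by monotonicity of $\pi(\cdot)$ on $[a^N,a^*]$ we get $\pi(\hat x(t))\ge \pi(\bar x(t))$ for all $t$, and since $V$ is just the $\lambda e^{-\lambda t}$-weighted integral of $\pi(\cdot)$ plus the $e^{-\lambda T}$ initial term, $V(\hat x)\ge V(\bar x)$; summing over the two players gives the social-welfare claim. The interesting part is the SPE check. For a time $-t$ outside $(-\kappa^i T,-\kappa^{j-1}T]$ one has to be slightly careful because although the action at $-t$ itself is unchanged, the retaliation window $(-t,-t+kt]$ may overlap the modified span, and there the integrand $L(\hat x(s))$ has \emph{increased} (actions went up towards $a^*$, and $L$ is increasing on $[a^N,a^*]$), so the right-hand side of Eq.~(\ref{LR incentive constraint}) only grows while the left-hand side $G(\hat x(t))e^{-\lambda t}$ is unchanged; hence the constraint is preserved. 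For a time $-t$ inside the modified span, the action is now $a_j$, so $G(\hat x(t))=G(a_j)$, which could be \emph{larger} than the original $G(\bar x(t))$ — this is the delicate case.

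The hard part, and the main obstacle, is exactly this last case: showing that the incentive constraint still holds for the raised actions on $(-\kappa^i T,-\kappa^{j-1}T]$. The key observation to exploit is that $a_j$ already satisfied the constraint Eq.~(\ref{LR incentive constraint}) at times in its \emph{original} slot $(-\kappa^{j-1}T,-\kappa^j T]$, where $t$ is \emph{smaller}; for the new occurrences of $a_j$ we are at \emph{larger} $t$ (earlier in the game). I would argue that the constraint $G(a)e^{-\lambda t}\le \int_{t-kt}^{t}L(x(s))\lambda e^{-\lambda s}\,ds$, after dividing through appropriately (as done to obtain the recurrence Eq.~(\ref{Eq:recurrence_relation})), becomes \emph{easier} to satisfy as $t$ increases, because the expected-deviation-gain factor $e^{-\lambda t}$ decays while the retaliation window $(-t,-t+kt]$ is longer and covers actions that, by monotonicity of $\hat x$ on the now-consolidated block, are all $\ge a_j$ — so $L(\hat x(s))\ge L(a_j)$ there. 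Concretely I would bound the RHS below by the retaliation integral restricted to the block where $\hat x\equiv a_j$ (and to the tail, where actions are $\le a_j$ but one can reuse the fact that $\bar x$ was SPE there) and compare with the relaxed/limit form of the constraint, invoking the terminal-condition analysis Eq.~(\ref{Eq:constraint_penultimate_action}) if the block abuts the deadline. If a fully clean inequality is elusive, the fallback is to appeal to Corollary~\ref{corollary1} and carry an $\epsilon$ through, concluding $\hat x$ supports an approximate SPE; but I expect the monotonicity-plus-larger-$t$ argument to close it exactly.
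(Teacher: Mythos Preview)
Your proposal is correct and follows essentially the same route as the paper: the paper also builds $\hat x$ via Eq.~(\ref{Eq:hat_x}), gets the welfare inequality from the pointwise comparison $\hat x(t)\ge \bar x(t)$, and then checks the SPE constraint by a case split on where $-t$ lies relative to the modified span, with the delicate case being $-t\in(-\kappa^i T,-\kappa^{j-1}T]$, handled exactly by your ``$a_j$ already satisfied the constraint on its original slot, and at larger $t$ the retaliation window is longer and lies entirely in the $a_j$-block'' argument. Two minor remarks: the retaliation window in that case is contained in the set where $\hat x\equiv a_j$ (not merely $\ge a_j$), which makes the bound a clean one-line computation; and neither the terminal-condition detour via Eq.~(\ref{Eq:constraint_penultimate_action}) nor the approximate-SPE fallback is needed---the exact inequality closes directly.
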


\begin{proof}
For ease of notation, We rewrite the expected payoff function $V(\sigma, T)$ into $V(x)$. This will not change the computation. We say $V(\tilde{x}) \succeq V(x)$ if $\forall T \geq 0, V(\tilde{x}, T) \geq V(x, T)$, which means plan $\tilde{x}$ gains no less expected payoff than $x$ on any scale of time period. For short, we say $V(x)$ is the expected payoff of plan $x$.

Given a bounded piecewise constant plan $\bar x$, construct $\hat x$ following Eq.(11). This construction already shows $\hat x$ is monotone. Now we prove that $\hat x$ supports SPE and has higher social welfare than $\bar x$. We prove Lemma 2 by comparing the expected deviation gain and the expected retaliation loss (punishment loss) in different time intervals.
        
Case 1: $-t \in (-\kappa^{j-1}T,0]$. According to Eq.(11), $\hat{x}(t) = \bar x(t)$ in this period. Since $\bar x$ leads to SPE, it is straightforward that $\hat{x}$ satisfies the equilibrium constraint.

Case 2: $-t\in(-\kappa^i T, -\kappa^{j-1}T]$. According to Eq.(11), $\hat{x}(t)= a_j $. A deviation from this action at any time $-t$ in this slot acquires an expected deviation gain $G(a_j) e^{-\lambda t}$. After this deviation, a retaliation is taken until $-t+kt$, and $-t+kt\leq-\kappa^{j}T$, meaning that the end of the retaliation can not surpass that of $a_j$. Therefore, the retaliation is always against $a_j$ and the expected retaliation loss is 
\begin{equation} \label{Eq:dotx_punishment loss}
\begin{aligned}\int_{(1-k)t}^{t} L(a_j) \lambda e^{-\lambda s} \mathrm{d}s=L(a_j) \big(e^{\lambda k t} - 1\big) e^{-\lambda t}.
\end{aligned}
\end{equation}
On the other hand, consider plan $\bar x$ in slot $(-\kappa^{j-1}T, -\kappa^{j}T]$. Since $\bar x$ leads to SPE, it should satisfy Eq.(4) for this slot. Substituting the upper-bound $-\kappa^j T$ into it and multiplying both sides by a factor $e^{-\lambda t}$, we get
\begin{equation} 
\begin{aligned}
G(a_j)e^{-\lambda t}
\leq L(a_{j})\big(e^{\lambda\kappa^j T}-1 \big)e^{-\lambda t}.
\end{aligned}
\end{equation}
Since we are considering slot $(-\kappa^i T, -\kappa^{j-1}T]$, it must be that any $t$ in this slot is larger than $\kappa^j$. 
Therefore, for plan $\hat x$ in slot $(-\kappa^i T, -\kappa^{j-1}T]$, the expected deviation gain $G(a_j)e^{-\lambda t}$ is less than the retaliation loss in Eq.(\ref{Eq:dotx_punishment loss}). The incentive constraint is satisfied in this slot.

Case 3:$-t\in(-\kappa^{i-1}T,-\kappa^{i}T]$. Since $\bar x$ results in SPE, it should satisfy Eq.(4), which is
\begin{equation}\label{Eq:case2constraint}
    \begin{aligned}
        G(a_{i})e^{-\lambda t}  &\leq\int^{t}_{{{\kappa}^{i}}T} L(a_{i})\lambda e^{-\lambda s} \mathrm{d}s
        +\int^{{{\kappa}^{i}}T}_{\kappa t} L(a_{i+1})\lambda e^{-\lambda s} \mathrm{d}s. 
\end{aligned}  
\end{equation}
By Eq.(11), we know $\hat{x}(t) = a_i = \bar x(t) \geq a_j > a_{i+1}$. Together with this ordering, Eq.(\ref{Eq:case2constraint}) leads to 
\begin{equation}
    \begin{aligned}
        G(a_{i})e^{-\lambda t}  
        &< \int^{t}_{{{\kappa}^{i}}T} L(a_{i})\lambda e^{-\lambda s} \mathrm{d}s
        +\int^{{{\kappa}^{i}}T}_{\kappa t} L(a_{j})\lambda e^{-\lambda s} \mathrm{d}s. \\
        &= \int^{t}_{{{\kappa}^{i}}T} L(a_{i})\lambda e^{-\lambda s} \mathrm{d}s
        +\int^{{{\kappa}^{i}}T}_{\kappa t} L(a_{i+1})\lambda e^{-\lambda s} \mathrm{d}s.
\end{aligned}  \end{equation}
The second line is just the retaliation loss of $\hat x$, thus it satisfies equilibrium constraint on this interval.

        
Case 4: $-t\in (-T,-\kappa^{i-1}T]$. During this period, it is always that $\hat x(t)=\bar x(t)$. Also, at the ending time of retaliation, it is still that $\hat x(t)=\bar x(t)=a_i$. Thus $\hat x(t)$ satisfies the equilibrium constraint as $\bar x(t)$ does.

     
The above four cases cover all possible deviations. In each case, $\hat x$ satisfies the equilibrium constraint, thus $\hat x$ leads to SPE. Moreover, we also know that $ \forall t \geq 0,\hat{x}(t) \geq \bar x(t)$, thus $V\big(\hat{x}\big) \succeq V\big(\bar x\big)$.     
\end{proof}

    
With Lemmas \ref{Lemma:bounding} and \ref{Lemma:monotonicity}, we can obtain Theorem \ref{Theorem:bounded and monotone} which reduces the optimization space as much as possible.
\begin{theorem}\label{Theorem:bounded and monotone}
Among all piecewise constant plans supporting SPE, a bounded MPC plan has maximum social welfare.
\end{theorem}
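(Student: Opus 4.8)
The plan is to prove Theorem~\ref{Theorem:bounded and monotone} by chaining the two reduction lemmas. Take an arbitrary piecewise constant plan $x$ that supports SPE. Lemma~\ref{Lemma:bounding} hands us a \emph{bounded} piecewise constant plan $\bar x$ that still supports SPE and satisfies $V(\bar x)\ge V(x)$, so it is enough to compare bounded piecewise constant plans against bounded MPC plans. Lemma~\ref{Lemma:monotonicity} is exactly the tool for that comparison: given a bounded piecewise constant plan supporting SPE that violates monotonicity, it returns a bounded plan that repairs the earliest violation, still supports SPE, and dominates the original at every horizon (in the notation of that proof, $V(\hat x)\succeq V(\bar x)$, hence in social welfare). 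Composing the two lemmas already shows that bounded MPC plans are not worse than any piecewise constant plan supporting SPE.

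The point that needs real care is that one application of Lemma~\ref{Lemma:monotonicity} removes only the first monotonicity violation, whereas an infinite action sequence can have many. I would make the passage to a fully monotone plan rigorous in one of two ways. Route (a), iteration: apply the construction of Eq.(\ref{Eq:hat_x}) repeatedly, using the index of the first violated slot as a monovariant; each step pushes this index strictly later (the flattened block takes the common value $a_j$ and the action immediately preceding it is $\ge a_j$, so no new violation is created at or before that block) and weakly raises $V$ at every horizon, and one then passes to the pointwise limit, checking that the SPE constraint Eq.(\ref{eq.a_i_incentive_constraint}) survives the limit because $G$, $L$ and the relevant integrals depend continuously on the action sequence. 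Route (b), one shot: replace the $n$-th slot action by $a_n^{\star}:=\sup_{m\ge n}\bar a_m$, which is non-increasing, stays in $[a^N,a^*]$, weakly increases every stage payoff and hence $V$, and re-run the four-case deviation analysis from the proof of Lemma~\ref{Lemma:monotonicity} to certify the SPE constraint slot by slot. Either route yields: for every piecewise constant plan supporting SPE there is a bounded MPC plan supporting SPE of weakly larger social welfare.

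Finally, to justify the word ``maximum,'' note that by Theorem~\ref{th1} and Corollary~\ref{corollary2} a bounded MPC plan supporting SPE is encoded by a non-increasing sequence $(a_n)_{n\ge 1}\subseteq[a^N,a^*]$ obeying the recurrence Eq.(\ref{Eq:recurrence_relation}) and terminal condition Eq.(\ref{Eq:constraint_penultimate_action}); this feasible set is a closed subset of the compact space $[a^N,a^*]^{\mathbb{N}}$ and $V$ is continuous on it (continuity of $\pi$ plus uniform control of the tail of the series in Eq.(\ref{Eq:expected_payoff})), so the supremum is attained, and by the previous paragraph that maximizer beats every piecewise constant plan supporting SPE. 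I expect the middle step to be the genuine obstacle: reconciling ``locally monotonized'' with ``globally monotone'' while keeping the SPE certificate valid --- route (a) needs the monovariant-plus-continuity argument to close, route (b) needs careful bookkeeping of which slot's retaliation loss is invoked in each deviation case --- whereas the bounding step and the compactness argument are essentially routine.
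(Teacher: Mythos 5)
Your proof takes essentially the same route as the paper, which obtains Theorem \ref{Theorem:bounded and monotone} simply by chaining Lemma \ref{Lemma:bounding} (bounding) with Lemma \ref{Lemma:monotonicity} (monotonization). The additional care you supply --- iterating the single-violation repair of Eq.(\ref{Eq:hat_x}) until the plan is globally monotone, and the compactness argument guaranteeing that the maximum is actually attained --- fills in steps the paper leaves implicit rather than constituting a different argument.
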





\subsection{Social Welfare Maximizing MPC Plan}
The previous subsection shows the bounded MPC plans have the highest social welfare among all piecewise constant plans. In this subsection, we will finally find the social welfare maximizing plan within this domain.

	

\subsubsection{KKT Conditions}
From Theorem \ref{th1} and Theorem \ref{Theorem:bounded and monotone} we know that, given a fixed constant number $c$, the cooperative plan $x=\{a_1,\cdots,a_n, \cdots, a_c\}$ can maximize the social welfare $V(x)$ if and only if it satisfies Eq.(\ref{Eq:recurrence_relation}) and Eq.(\ref{eq.a_i_incentive_constraint_lowerbound}), and is bounded and monotone. Thus the social welfare maximizing problem is formally represented as follows. 
\begin{argmaxi}
{\scriptstyle x=\left\{a_1,\cdots,a_c\right\}}{V(x)=\sum_{n=1}^{c}\pi\big(a_n\big)\lambda e^{-\lambda n} }{}{}
\addConstraint {G(a_n)
\leq L(a_{n+1})\big(e^{\lambda\tau(n)} -1\big)} 
\addConstraint{ G(a_c) 
\leq L(a_{c})\big(e^{\lambda\tau(c-1)} -1\big)}
\addConstraint{a_{n+1}\le a_n \le a^*}.
\end{argmaxi}\label{KKT_big}This is a typical non-linear optimizations problem which has Karush-Kuhn-Tucker (KKT) optimality conditions. In general, the optimal solution can be obtained by the Lagrange multiplier method. 
However, the computation cost could be high. Next, we further simplify this maximization into smaller problems.

\begin{figure*}[htp]
    \centering

    \subfigure[Error rate $2\%$, $k=0.33$]
    {
        \includegraphics[scale=0.32]{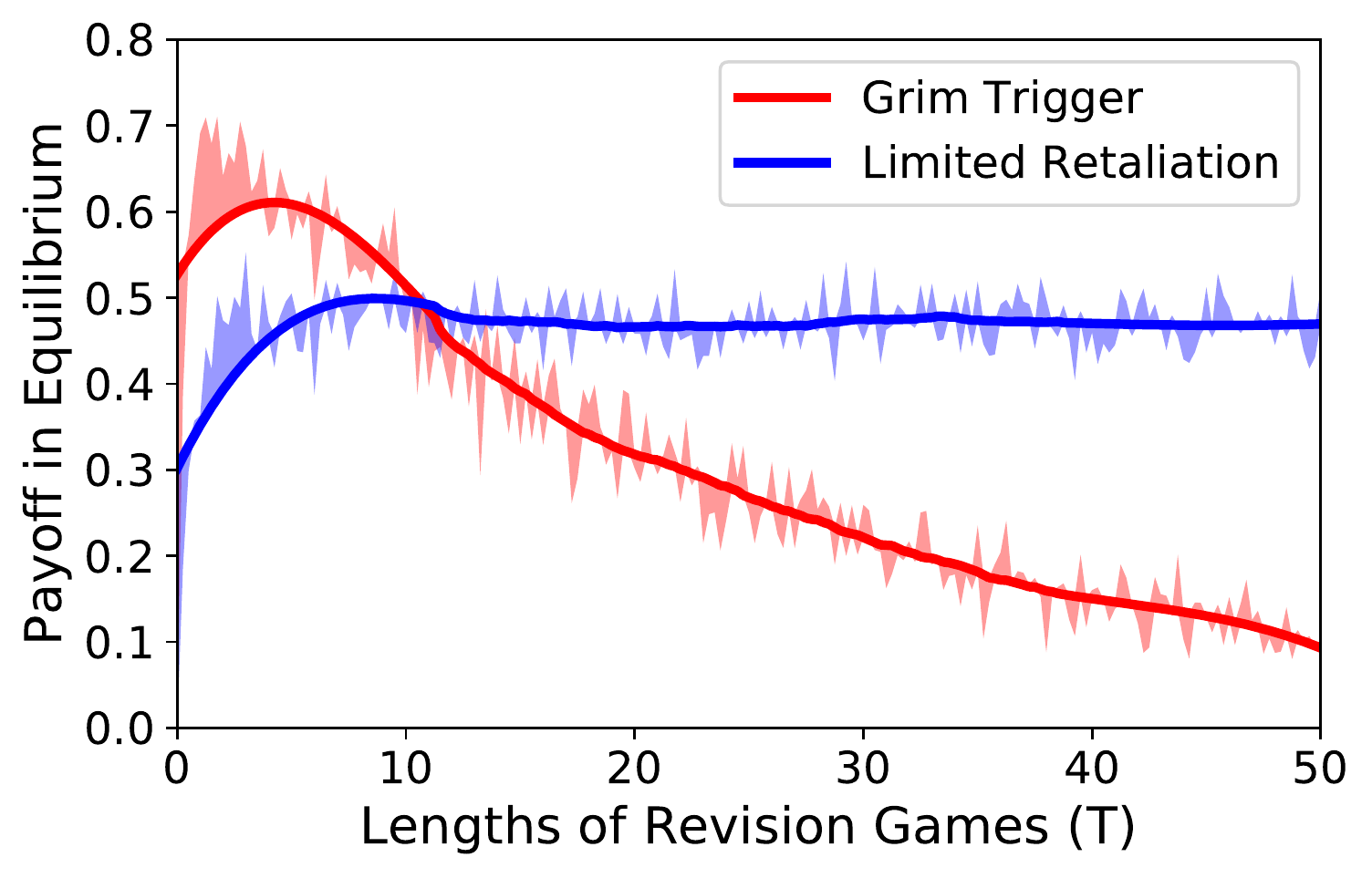}
        \label{fig:third_sub}
    }
    \subfigure[Error rate $10\%$, $k=0.33$]
    {
        \includegraphics[scale=0.32]{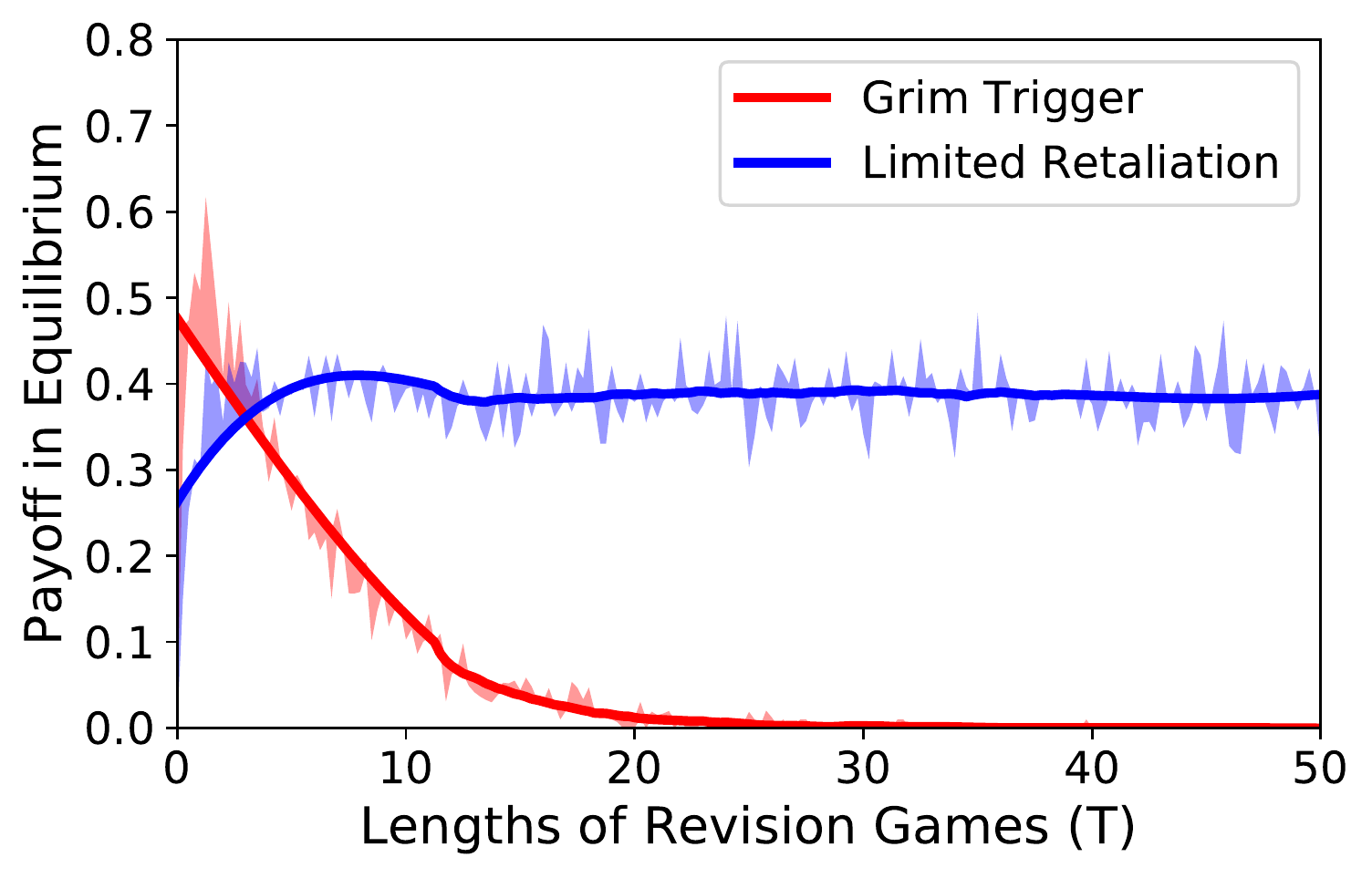}
        \label{fig:third_sub}
    }
    \subfigure[Error rate $30\%$, $k=0.33$]
    {
        \includegraphics[scale=0.32]{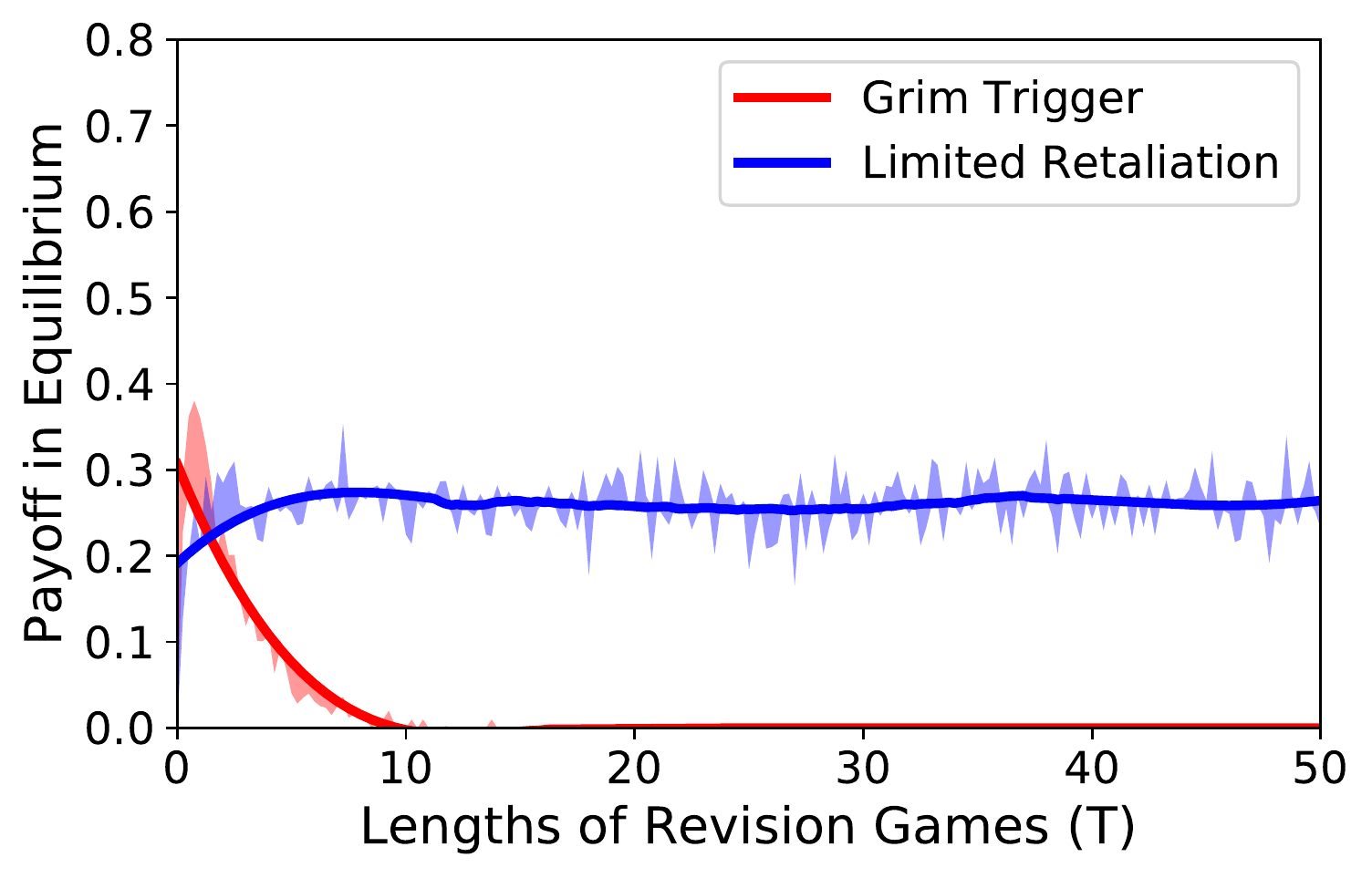}
        \label{fig:third_sub}
    }
    \caption{Equilibrium payoffs of LR and GT in revision Prisoner's Dilemma with different error rates. In each subfigure, $50$ different revision games are simulated, with lengths varying from $T=0^+$ to $T=50$. The $y$-axis is the expected equilibrium payoff. 
	Payoff in each simulated game is depicted by a colored dot, and all dots from $50$ revision games compose a curve. 
	}
    \label{fig:pd}
\end{figure*}
\subsubsection{Inductive Payoff Maximization}
From Theorem \ref{Theorem:bounded and monotone}, the optimization space has been reduced to bounded MPC plans. Eq.(14) can be transformed into many sub-problems.
\begin{theorem}\label{prop:KKTsub}
With a fixed terminal action $a_c$ satisfying Eq.(\ref{Eq:constraint_penultimate_action}),
the KKT optimility conditions Eq.(14) can be decomposed into a series of sub-problems for different $a_n$. Each sub-problem is characterized as follows:
\begin{argmaxi}
{\scriptstyle a_n}{\pi\big(a_n\big)}{}{}
\addConstraint {G(a_n)
\leq L(\dot{a}_{n+1})\big(e^{\lambda\tau(n)} -1\big)} 
\addConstraint{\dot{a}_{n+1}=\arg\max \pi\big(a_{n+1}\big)}
\addConstraint{\dot{a}_{n+1}\le a_n \le a^*}.
\end{argmaxi}
The inductive solution of the series of sub-problems Eq.(15) is identical to that of the original KKT conditions Eq.(14).
\end{theorem}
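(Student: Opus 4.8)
The plan is to exploit the \emph{optimal substructure} of the program Eq.(14). Its objective $V(x)=\sum_{n=1}^{c}\pi(a_n)\lambda e^{-\lambda n}$ is a sum of terms with strictly positive weights, each depending on a single coordinate $a_n$, so maximizing $V$ reduces to pushing every $\pi(a_n)$ as high as the constraints permit; and the only constraints coupling distinct actions, $G(a_n)\le L(a_{n+1})\big(e^{\lambda\tau(n)}-1\big)$ together with $a_{n+1}\le a_n$, are ``backward looking'' in a monotone fashion. First I would record the monotonicity facts used throughout: by the standing assumptions and Theorem \ref{Theorem:bounded and monotone} we may restrict to bounded monotone plans, so all $a_n\in[a^N,a^*]$, on which $G$, $L$ and $\pi$ are strictly increasing. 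Hence the right-hand side $L(a_{n+1})\big(e^{\lambda\tau(n)}-1\big)$ is non-decreasing in $a_{n+1}$ — a larger successor action only \emph{relaxes} the incentive constraint on $a_n$ — and on any subinterval of $[a^N,a^*]$ the payoff $\pi(a_n)$ is maximized at the right endpoint.

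Next I would make the backward-greedy construction precise and match it to the nested sub-problems Eq.(15). Put $\dot a_c:=a_c$, the fixed terminal action (which satisfies Eq.(\ref{Eq:constraint_penultimate_action})), and for $n=c-1,\dots,1$ let $\dot a_n$ be the largest $a$ with $\dot a_{n+1}\le a\le a^*$ and $G(a)\le L(\dot a_{n+1})\big(e^{\lambda\tau(n)}-1\big)$. Since $G$ is continuous and strictly increasing, the feasible set at stage $n$ is an interval whose right endpoint is its $\pi$-maximizer, so $\dot a_n$ is exactly the solution of the stage-$n$ sub-problem Eq.(15) with $\dot a_{n+1}$ obtained inductively from the stage-$(n+1)$ sub-problem; thus the nested solution of Eq.(15) is precisely $(\dot a_1,\dots,\dot a_c)$. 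The key lemma here is that this construction stays feasible, i.e. $\dot a_{n+1}\le\dot a_n$ for every $n$, so that the ordering constraints of Eq.(14) hold and each feasible interval is non-empty. I would prove it by downward induction: feasibility of $\dot a_{n+1}$ at stage $n+1<c$ gives $G(\dot a_{n+1})\le L(\dot a_{n+2})\big(e^{\lambda\tau(n+1)}-1\big)$; since the slot lengths shrink toward the deadline, $\tau(n)=k\kappa^{n-1}T>\tau(n+1)$, hence $e^{\lambda\tau(n)}-1>e^{\lambda\tau(n+1)}-1$, and since $\dot a_{n+2}\le\dot a_{n+1}$ by hypothesis, $L(\dot a_{n+2})\le L(\dot a_{n+1})$; combining, $G(\dot a_{n+1})\le L(\dot a_{n+1})\big(e^{\lambda\tau(n)}-1\big)$, so $\dot a_{n+1}$ itself lies in the stage-$n$ feasible interval, whence $\dot a_n\ge\dot a_{n+1}$. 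The base case $n=c-1$ is exactly the terminal constraint Eq.(\ref{Eq:constraint_penultimate_action}). Together with $\dot a_n$ satisfying Eq.(\ref{Eq:recurrence_relation}) by construction, this makes $(\dot a_1,\dots,\dot a_c)$ feasible for Eq.(14).

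Finally I would prove optimality: among all feasible plans of Eq.(14) with the given terminal action $a_c$, none beats $(\dot a_1,\dots,\dot a_c)$. Suppose a feasible $x'=(a'_1,\dots,a'_c)$ with $a'_c=a_c$ had $\pi(a'_n)>\pi(\dot a_n)$ for some $n$; take the largest such $n$, which is $<c$ since $a'_c=\dot a_c$. For all $m>n$, $\pi(a'_m)\le\pi(\dot a_m)$, and strict monotonicity of $\pi$ gives $a'_{n+1}\le\dot a_{n+1}$, hence $L(a'_{n+1})\le L(\dot a_{n+1})$. Then either $a'_n\ge\dot a_{n+1}$, in which case $a'_n$ meets all three constraints of the stage-$n$ sub-problem Eq.(15) (since $\dot a_{n+1}\le a'_n\le a^*$ and $G(a'_n)\le L(a'_{n+1})(e^{\lambda\tau(n)}-1)\le L(\dot a_{n+1})(e^{\lambda\tau(n)}-1)$), so $\pi(a'_n)\le\pi(\dot a_n)$, a contradiction; or $a'_n<\dot a_{n+1}$, whence $\pi(a'_n)<\pi(\dot a_{n+1})\le\pi(\dot a_n)$, again a contradiction. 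Thus $\pi(a'_n)\le\pi(\dot a_n)$ for all $n$, and since $V$ is a positively-weighted sum of the $\pi(a_n)$, $(\dot a_1,\dots,\dot a_c)$ attains the maximum of Eq.(14); strict monotonicity of $\pi$ on $[a^N,a^*]$ makes each $\dot a_n$ the unique maximizer, so the nested solution of Eq.(15) coincides with the solution of Eq.(14). The step I expect to be the main obstacle is the feasibility lemma $\dot a_{n+1}\le\dot a_n$: it is what keeps the backward-greedy plan inside the feasible region of Eq.(14), and it hinges essentially on the structural fact that the slot lengths $\tau(n)$ decrease toward the deadline — without it a greedy step could overshoot and violate monotonicity.
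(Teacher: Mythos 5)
Your proposal is correct, and its core comparison is the same one the paper uses: look at the largest index where two plans disagree and exploit the monotonicity of $\pi$, $G$ and $L$ together with the fact that the constraint on $a_n$ involves only $a_{n+1}$. The organization differs, though. The paper argues by contradiction on the optimal solution $\tilde x$ of Eq.(14): at the last slot where $\tilde x$ and the sub-problem solution $\dot x$ differ, it swaps $\tilde a_n$ for $\dot a_n$ and shows the exchanged plan is feasible with strictly larger welfare. You instead prove two things directly: a feasibility lemma showing the backward-greedy plan $\dot x$ itself satisfies all constraints of Eq.(14) (each stage's feasible set is a nonempty closed interval containing $\dot a_{n+1}$, using $\tau(n)>\tau(n+1)$, i.e.\ $\kappa\in(0,1)$, and the terminal condition Eq.(\ref{Eq:constraint_penultimate_action})), and then coordinate-wise dominance of $\dot x$ over \emph{every} feasible plan with the same terminal action, which with the positive separable weights gives optimality and, via strict monotonicity of $\pi$, uniqueness. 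What your route buys is precisely the step the paper leaves implicit: that the nested sub-problems are non-vacuous and their solutions respect the ordering constraints $a_{n+1}\le a_n\le a^*$ of Eq.(14) (the paper's exchanged plan $\bar x$ also needs $\dot a_n\le\tilde a_{n-1}$, which is not checked there), and your observation that this hinges on the slot lengths shrinking toward the deadline is exactly the right structural reason. The cost is a somewhat longer argument than the paper's one-step exchange; both are valid, and yours is the more self-contained of the two.
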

\begin{proof}
 Let $\tilde{x}= \{\tilde{a}_1, \dots, \tilde{a}_c\}$ be the optimal solution of Eq.(14) and let $\dot{x}=\{\dot{a}_1, \dots, \dot{a}_c\}$ be the solution of a series of sub-problems in Eq.(15). Note that $\tilde{a}_c=\dot{a}_c$, both determined by Eq.(\ref{Eq:constraint_penultimate_action}). We prove $\tilde{x}=\dot{x}$ by contradiction. 
 
Assume $\tilde{x}$ disagrees with $\dot{x}$ at some actions. Since the two terminal actions  $\tilde{a}_c=\dot{a}_c$, there exists at least one time slot, after which $\tilde{x} $  is identical to $\dot{x}$. Let $n = \arg\max \{\tilde{a}_n \ne \dot{a}_n\}$. Since $\tilde{a}_{n+1}= \dot{a}_{n+1}$, then by Eq.(\ref{Eq:recurrence_relation}), $\tilde{a}_n$ and $\dot{a}_n$ have the same constraints. $\pi$ is strictly increasing, by  $\arg\max \pi$ in Eq.(15), $\dot{a}_n > \tilde{a}_n$. Since $L(a)$ is strictly increasing, then $G(\tilde{a}_{n-1}) \leq L(\tilde{a}_n)(e^{\lambda \tau (n)}-1) < L(\dot{a}_n)(e^{\lambda \tau (n)}-1).$
Thus another plan $
    \bar{x} = \{\tilde{a}_1, \dots, \tilde{a}_{n-1}, \dot{a}_n, \tilde{a}_{n+1}, \dots, \tilde{a}_c\}
$ is also a solution for Eq.(14) and $V(\bar{x}) > V(\tilde{x})$ by monotonicity of $\pi(a)$. This contradicts the premise that $\tilde x$ is optimal.
\end{proof}

By applying Lagrange multiplier method inductively, we can solve the series of Eq.(13) and obtain the social welfare maximizing MPC plan. The expected payoff for the optimal LR strategy with this plan is simply $V(\dot{x})=\sum_{n=1}^{c}\pi\left(\dot{a}_n\right).$

\section{Numerical Results}

	The existing methods of revision games are all based on the grim trigger (GT) strategy \cite{kamada2020revision}, where as long as there is any deviation (either by unintentional action error or intentional deviation), both players will never forgive, and retaliate to the deadline. In this section, we examine the performance of LR strategies, and compare them with GT strategy in Prisoner's Dilemma and the famous Cournot game, where players may make occasional mistakes.

\subsection{Limited Retaliation in Prisoner's Dilemma}

	Consider in a continuous Prisoner's Dilemma, player-$i$'s payoff is $\pi_i(a_i,a_{-i})=B(a_{-i})-C(a_{i})$,
	where $B(a)>C(a)$ and $B(a)$ and $C(a)$ denote the player's benefit and cost, respectively \cite{killingback1999variable}. Without loss of generality, we use $B(a)=2a$, $C(a)=a^2$  and $a_i\in A=[0,1]$. Then the payoff of $i$ is: $\pi_i=2a_{-i}-a_i^2$. The Nash action is $a^N=0$ while the fully cooperative action is $a^*=1$. For the revision version of this stage game, we set $\lambda=1 $ for the Poisson arrival rate. For the LR strategy, we set the retaliation fierceness $k=0.33$ as an example. The collusive plan for this LR strategy has been shown in Figure \ref{fig:constant_func}. Now we set different lengths and different error rates of the revision games and show the simulation results in Figure \ref{fig:pd}.
Players' action error rates in the subfigures (a), (b) and (c) are low ($2\%$), medium ($10\%$) and high ($30\%$), respectively.
	
	In each subfigure, each value of $T$ on the $x$-axis represents a total length of a simulated revision game. That is, in each subfigure, $50$ different revision games are simulated, with game lengths varying from $T=0^+$ to $T=50$. The $y$-axis is the expected equilibrium payoff. There are two colored curves in each subfigure, one is for LR strategy equilibrium and the other is for GT strategy equilibrium. The payoff for each strategy in each simulated game is depicted by a colored dot, and all the dots compose a colored curve. For each simulated game (i.e., each dot), we sample players' actions according to the error rate. For example, in subfigure (a), players follow the plan $x$ of LR (or the plan of GT) with a probability $98\%$, but has a probability $2\%$ to make a mistake at \emph{any} revision opportunity. We have two observations from the simulations.
	\begin{enumerate}
	    \item By comparing the two curves in all three subfigures, we can see that,
	    as the error rate increases, LR becomes better and better than GT. Thus in a noisy environment, forgiveness is a crucial component to correct agents' errors and LR can help agents to obtain high payoffs. Evidence of forgiveness has commonly been seen in the real world. Real people are usually not extremely grim. A society in which people are cooperative, vengeful but also forgiving usually have higher social welfare.

	    \item In each subfigure,
	    the payoff of LR is relatively stable for different game lengths (different values on the $x$-axis), while the payoff for GT decreases drastically as the game length increases.
	    Thus for a revision game with a longer duration, LR performs better than GT.
	   Such an observation is coincident with real-world scenarios that in people's long-run interactions, forgiveness plays a more important role for people to go to a win-win outcome. A longer future interaction confers upon people more incentive to forgive a mistake, as long as this mistake can be properly punished.
	\end{enumerate}

\subsection{Impact of Noise Level}
\begin{figure*}
    \centering
    \subfigure[Error rate $2\%$, $k=0.33$]
    {
        \includegraphics[scale=0.3]{pd-k0-33-e0-02.pdf}
        \label{fig:third_sub}
    }
    \subfigure[Error rate $10\%$, $k=0.33$]
    {
        \includegraphics[scale=0.3]{pd-k0-33-e0-10.pdf}
        \label{fig:third_sub}
    }
    \subfigure[Error rate $30\%$, $k=0.33$]
    {
        \includegraphics[scale=0.3]{pd-k0-33-e0-30.pdf}
        \label{fig:third_sub}
    }
    
        \subfigure[Error rate $2\%$, $k=0.60$]
    {
        \includegraphics[scale=0.3]{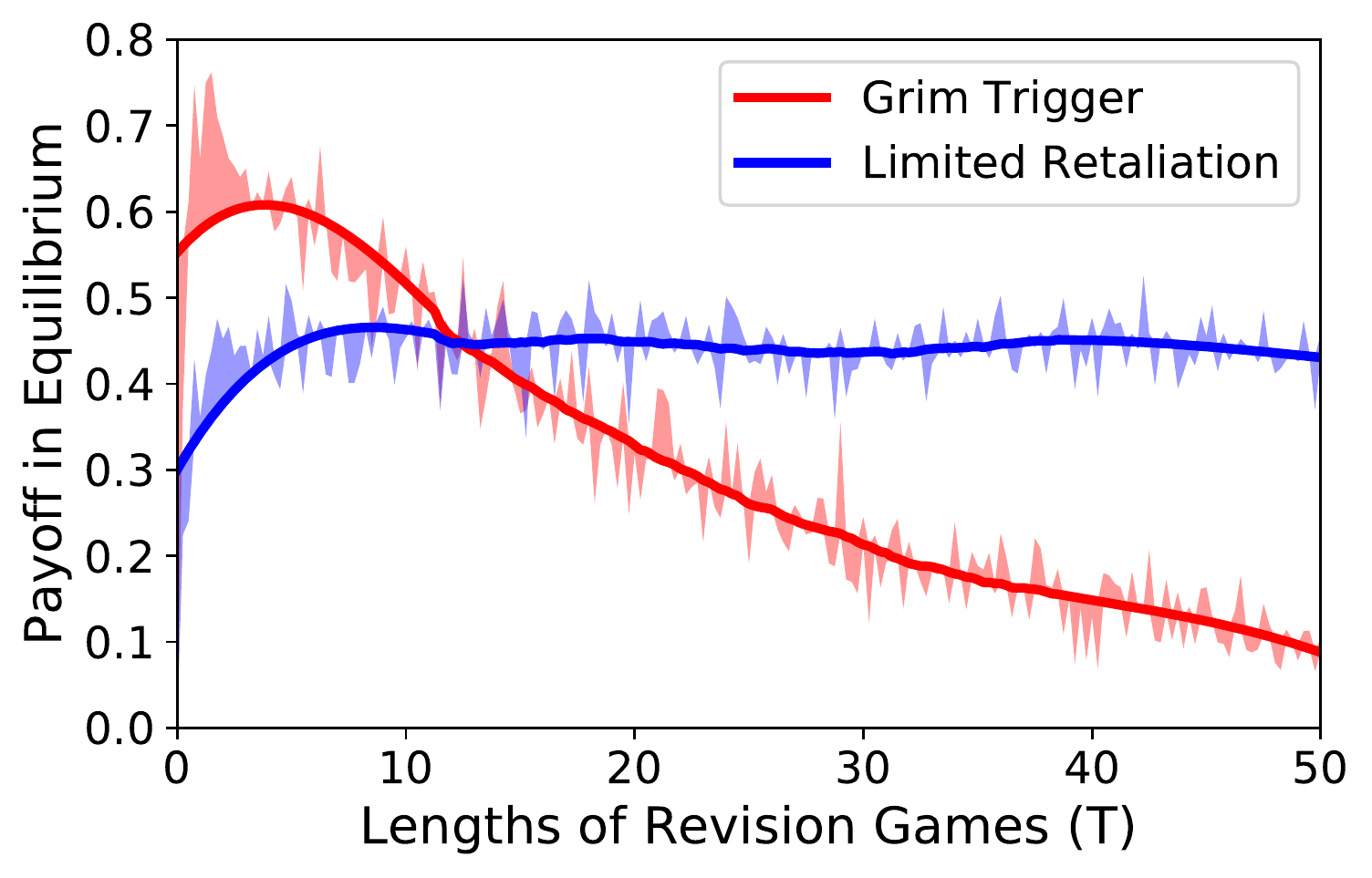}
        \label{fig:third_sub}
    }
    \subfigure[Error rate $10\%$, $k=0.60$]
    {
        \includegraphics[scale=0.3]{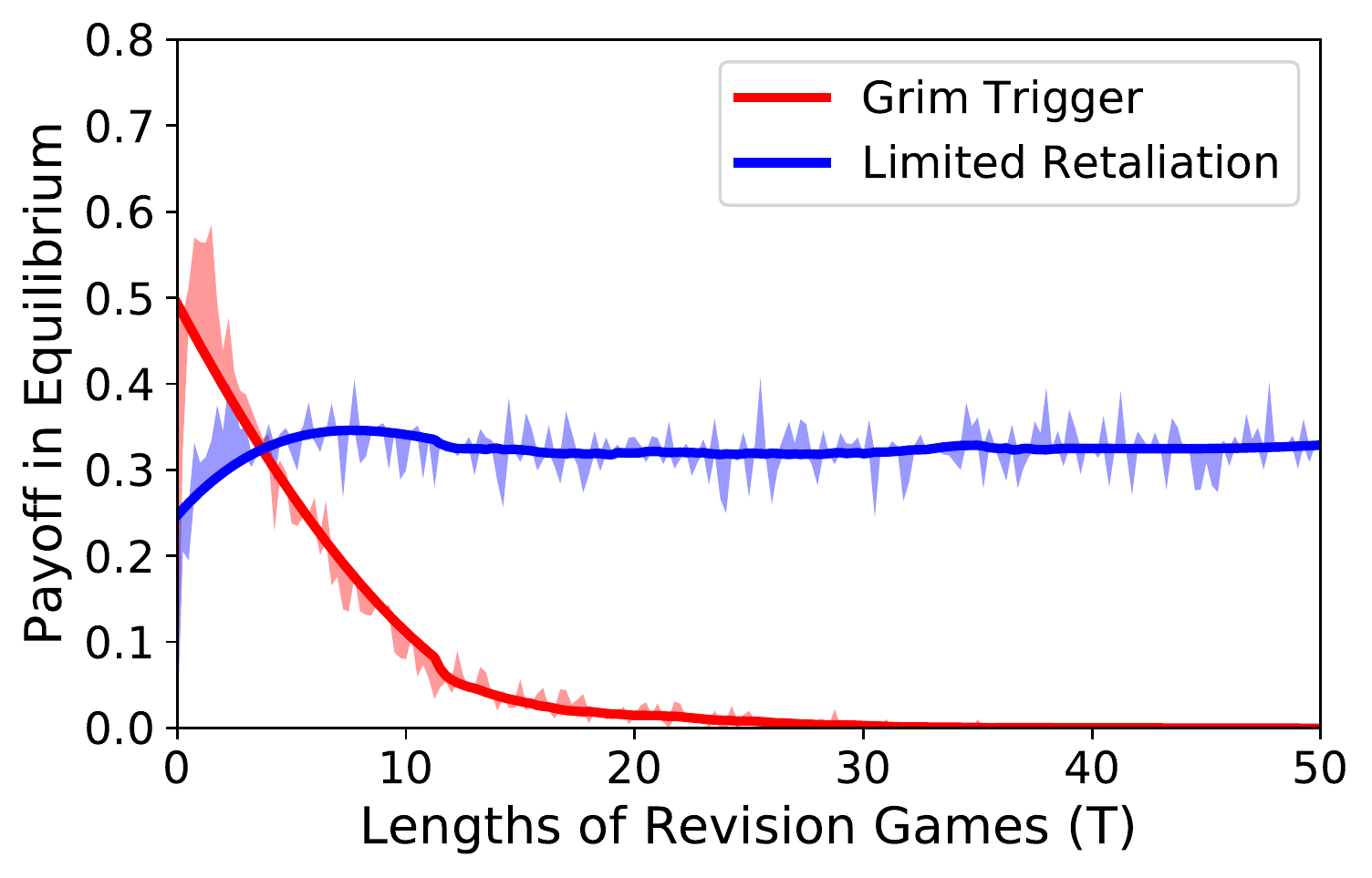}
        \label{fig:third_sub}
    }
    \subfigure[Error rate $30\%$, $k=0.60$]
    {
        \includegraphics[scale=0.3]{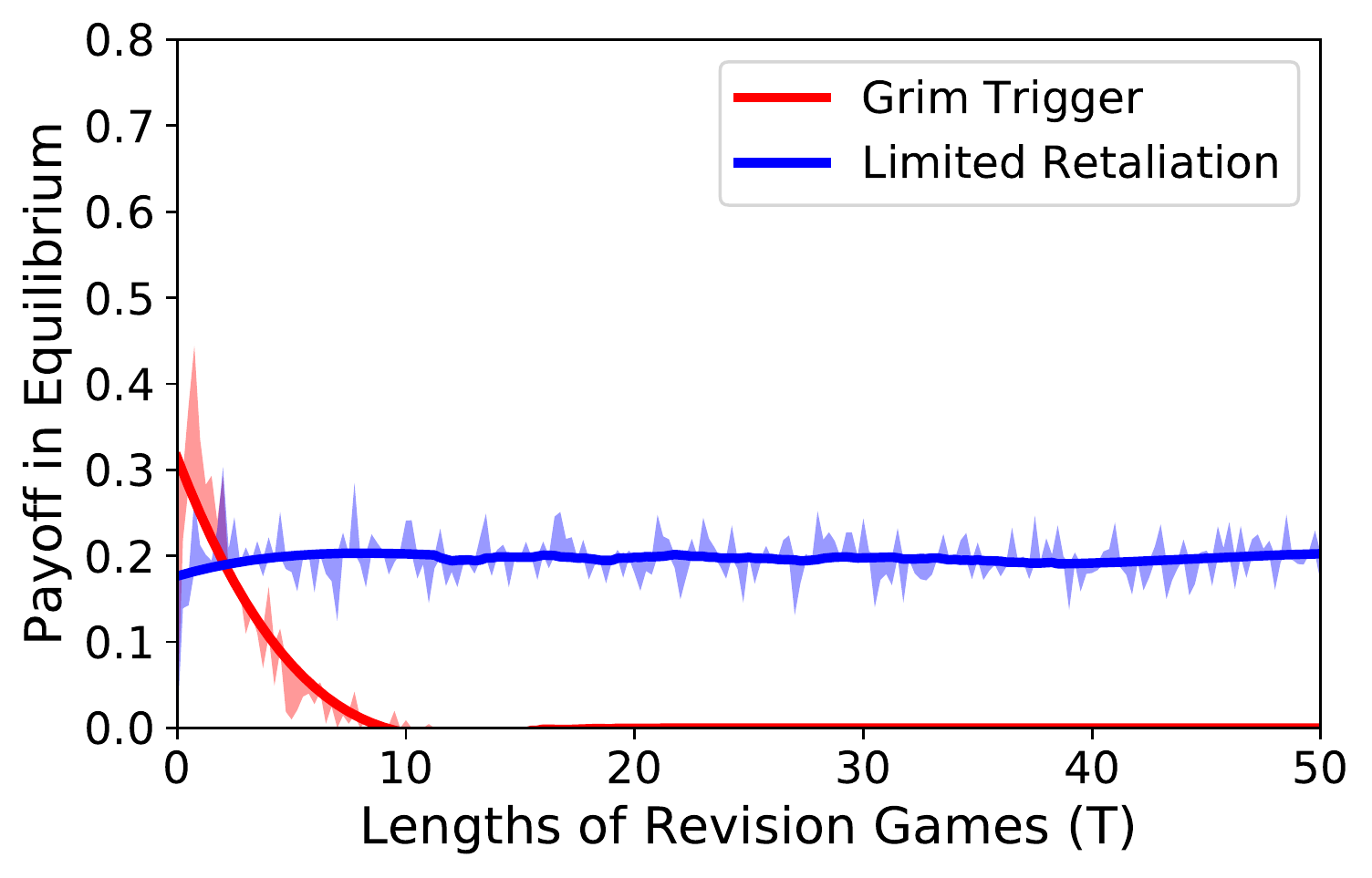}
        \label{fig:third_sub}
    }
    
            \subfigure[Error rate $2\%$, $k=0.85$]
    {
        \includegraphics[scale=0.3]{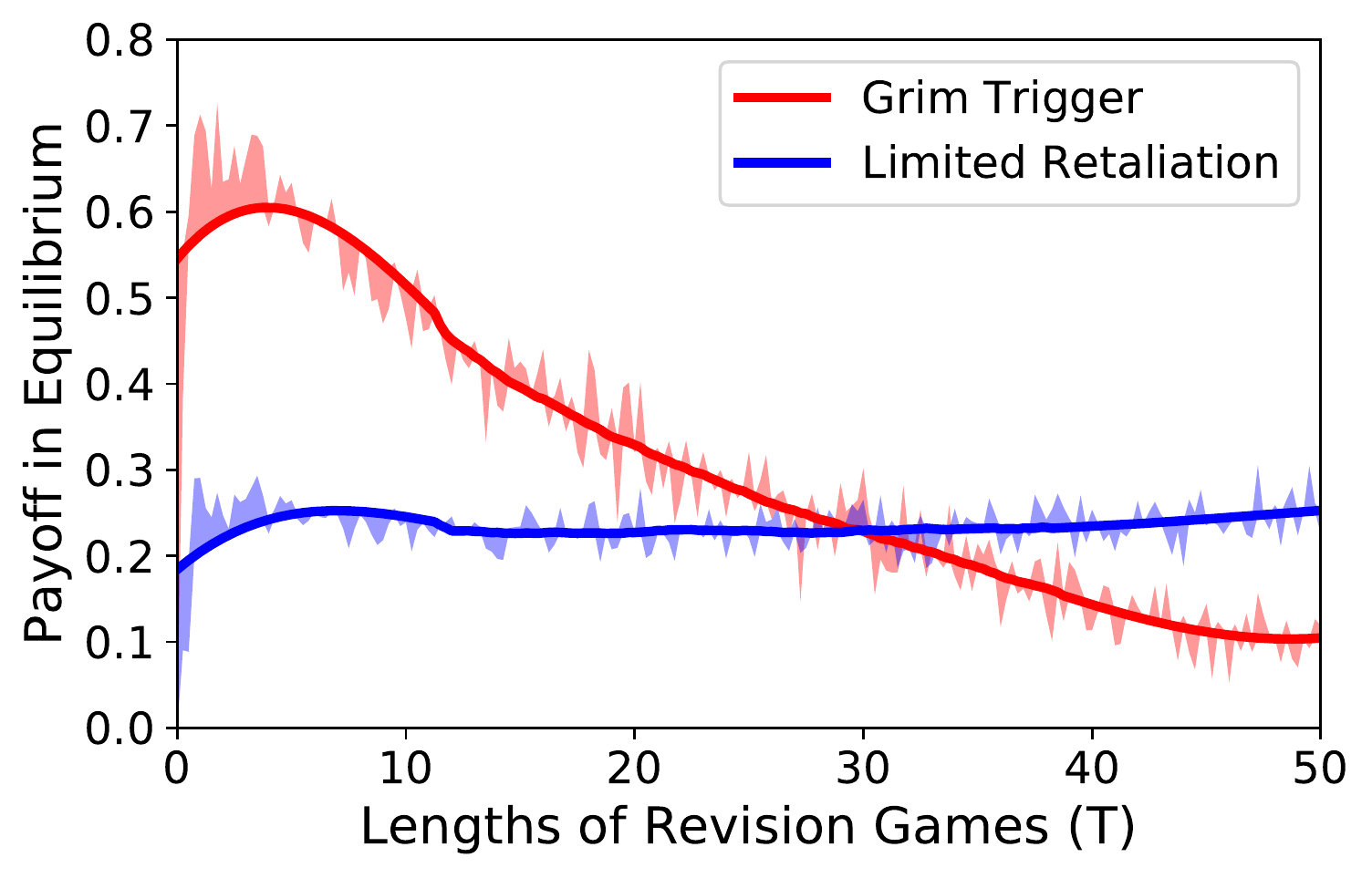}
        \label{fig:third_sub}
    }
    \subfigure[Error rate $10\%$, $k=0.85$]
    {
        \includegraphics[scale=0.3]{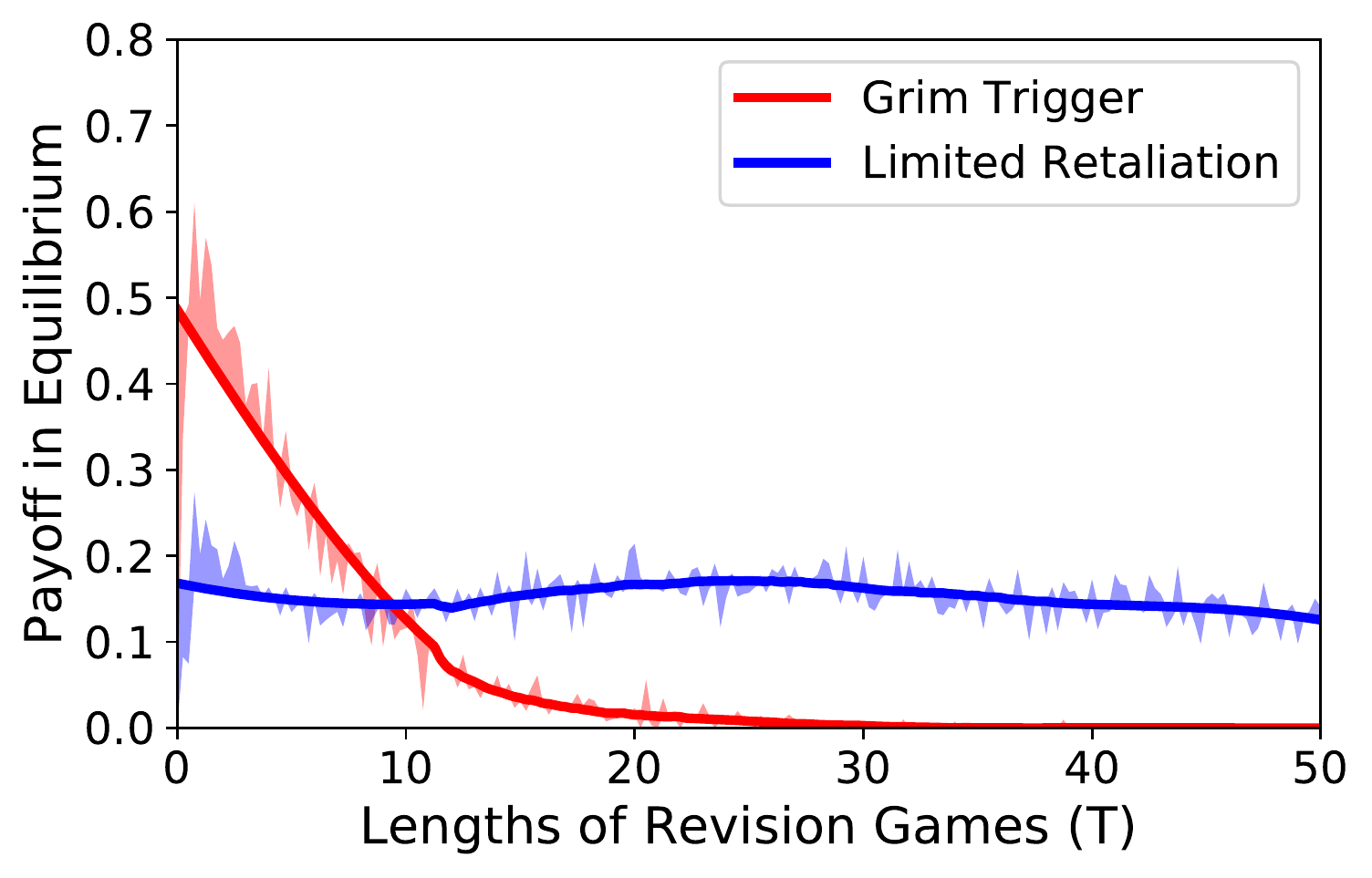}
        \label{fig:third_sub}
    }
    \subfigure[Error rate $30\%$, $k=0.85$]
    {
        \includegraphics[scale=0.3]{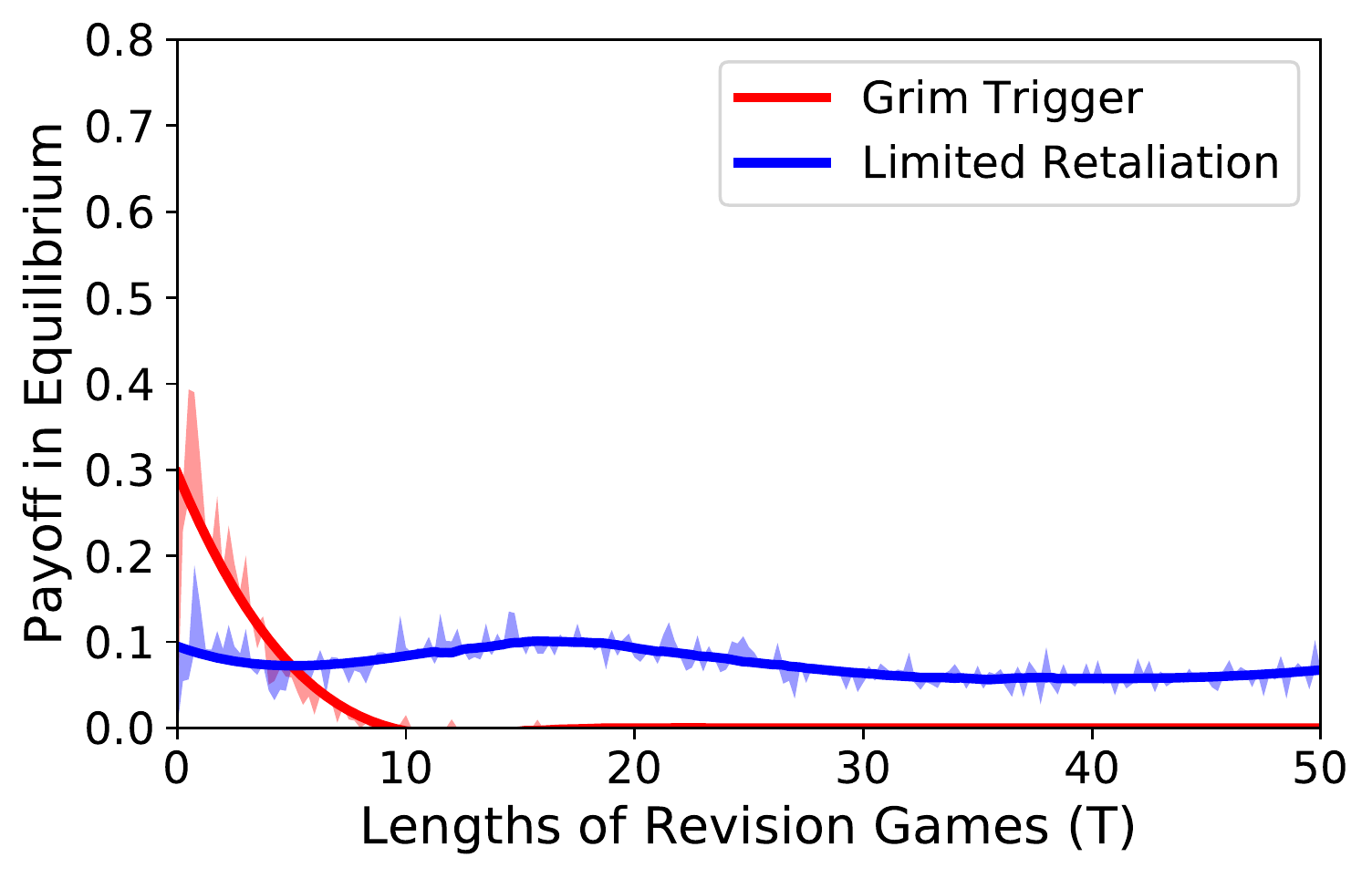}
        \label{fig:third_sub}
    }
    \caption{Equilibrium Payoffs in the Prisoner's Dilemma revision games with different error rates. GT strategy is the same in all subfigures while LR strategies have different retaliation fierceness $k$. }
    \label{fig:pd}
\end{figure*}

In this part, we show more simulations of LR strategies in Prisoner's Dilemma revision games. The results are in Figure \ref{fig:pd}. In the first row, the retaliation fierceness is set as $k=0.33$, which represents a mild retaliation. In the second row, players' retaliations are $k=0.6$, which represents a medium retaliation. In the last row, players are very vengeful, their retaliation fierceness are set to $k=0.85$. The grim trigger strategies is identical in each subfigure.
The error rates in the first, the second and the last columns are $2\%,10\%$ and $30\%$, respectively.

The existing methods of revision games are all based on the grim trigger (GT) strategy, where as long as there is any deviation (either by unintentional action error or intentional deviation), both players will never forgive, and retaliate to the deadline. 
More importantly, if players have any observation or action error, or imperfect rationality, the GT strategy players will soon go to a loss-loss situation and can never escape. This makes the social welfare become low. LR strategies can return back to mutual cooperation, even when there is an intentional or unintentional deviation by any player. 

Within each row in Figure \ref{fig:pd}, comparing with GT, the LR strategies show a clear advantage, which has been carefully discussed in the main text.
In each column, the performances of different LR strategies are different. Within the three candidate $k$, the LR strategy with $k=0.33$ performs the best under every error rate. It is worth noting that, given a certain error rate, one can search over the whole space of $k$, and generate different LR strategies and compute their payoffs in equilibrium. By numerically comparing the payoffs, a good $k$ against the given error rate can be obtained. 

Generally, higher error rate makes both grim trigger and LR perform worse. But LR strategies are more robust than the grim trigger strategy. Another observation is, as $k$ increases, the performance of LR strategy is more close to that of grim trigger strategy. This is intuitive, since an LR strategy with a full retaliation (i.e., $k=1$) simply degenerates to the grim trigger strategy, while an LR strategy with no retaliation (i.e., $k=0$) is the other extreme case where the player \emph{always} follows the collusive plan, even when there is a deviation. A larger $k$ makes the LR strategy more like a grim trigger.

\subsection{LR Strategies in Cournot Competition Game}
We also examine the LR strategies in the famous Cournot competition game. Cournot competition is a conventional economic model in which competing firms choose a quantity to produce independently and simultaneously.
The model applies when firms produce identical or standardized goods and it is assumed without additional institution or mechanism, they cannot collude or form a cartel.
The idea that one firm reacts to what it believes a rival will produce forms part of the perfect competition theory.

Consider two firms $i=1,2$ producing homogeneous products. Firms have market power, i.e., each firm's output decision affects the product's price. They compete in quantities, and choose quantities simultaneously, seeking to maximize profits.
Denote $q_i$ the quantity of firm $i$. The demand curve of the market is $$P=p_0-b(q_i+q_{-i}),$$
	where $p_0>c$ and $b>0$. The payoff function for firm $i$ is
	$$\pi_i(q_i,q_{-i})=\left[(p_0-b(q_i+q_{-i})-c\right]q_i.$$
	To find $i$'s optimal quantity, we can set the partial derivative as
	$$ \frac{\partial \pi_i}{\partial q_i}=p_0-2bq_i-bq_{-i}-c=0,  $$
	whereby the Nash quantity for this game is obtained as $q^N=\frac{p_0-c}{3b}$ and its Nash payoff is: $$\pi^N=(p_0-2bq^N-c)q^N=\frac{(p_0-c)^2}{9b}.$$
	
For the revision version of this Cournot competition, we still consider the symmetric equilibrium. Thus for any quantity $q$, the payoff for a symmetric quantity profile is $\pi(q)=(p_0-2bq-c)q$, and according to Definition 3, the deviation gain for firm-$1$ is
	\begin{equation}\begin{aligned}
	G(q)&=\max\limits_{q_1} \pi_1(q_1,q) -\pi_1(q,q)\\
	&=\max\limits_{q_1} [p_0-b(q_1+q)-c]q_1-(p_0-2bq-c)q\\
	&=\max\limits_{q_1} \{-bq_1^2+(p_0-bq-c)q_1+[(c-p_0)q+2bq^2]\}\\
	&=\frac{(p_0-c-3bq)^2}{4b}.
	\end{aligned}\end{equation}
	
	


With the above information, we can generate LR strategies. Then, with these obtained LR strategies, we can evaluate their performances in the Cournot competition revision game with error rates.
For simulations, we set parameters as $p_0=10,c=5,b=1$ and $\lambda=1$. For the LR strategies, we set $k$ to $0.35$, $0.5$ and $0.65$, respectively.

Two examples of the generated collusive MPC plans are shown in Figures \ref{Sfig:cd_performance}$(a)$ and \ref{Sfig:cd_performance}$(b)$. Due to the Cournot game structure, the collusive plan $x$ monotonically increases in time. The simulation results of LR strategies using these collusive plans and the responding $k$ are shown in Figures \ref{Sfig:cd_performance}$(c)$ to \ref{Sfig:cd_performance}$(h)$.
Each row is for simulations of an LR strategy with a fixed $k$. We can see that the performance of LR strategies in Cournot competition revision game is in a same pattern as that for Prisoner's Dilemma revision game.

\begin{figure*}[ht]
    \centering
    \subfigure[An MPC plan for with $k=0.35$.]
    {
        \includegraphics[scale=0.36]{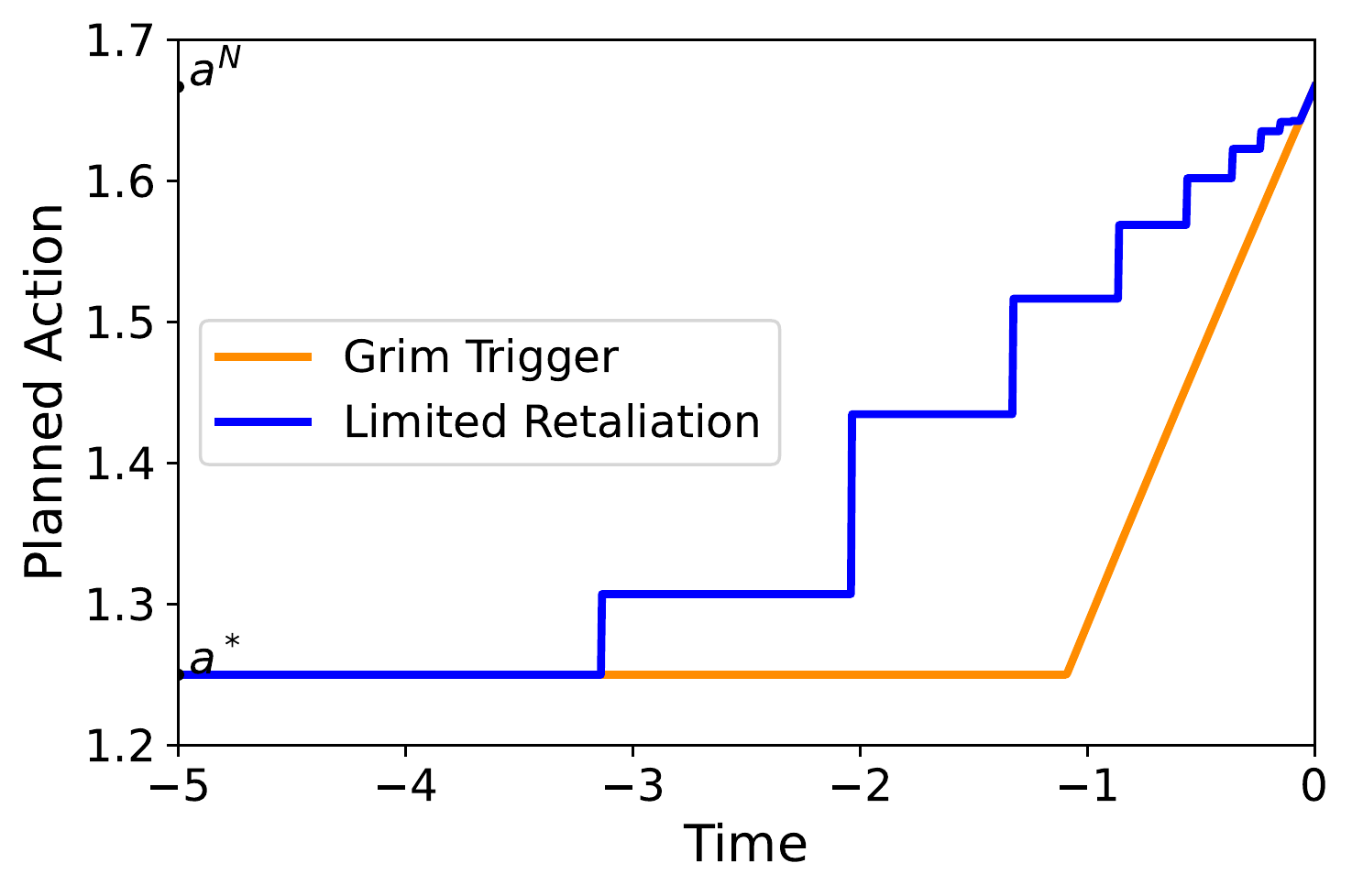}
        \label{fig:third_sub}
    }  
    \subfigure[An MPC plan for with $k=0.5$.]
    {
        \includegraphics[scale=0.36]{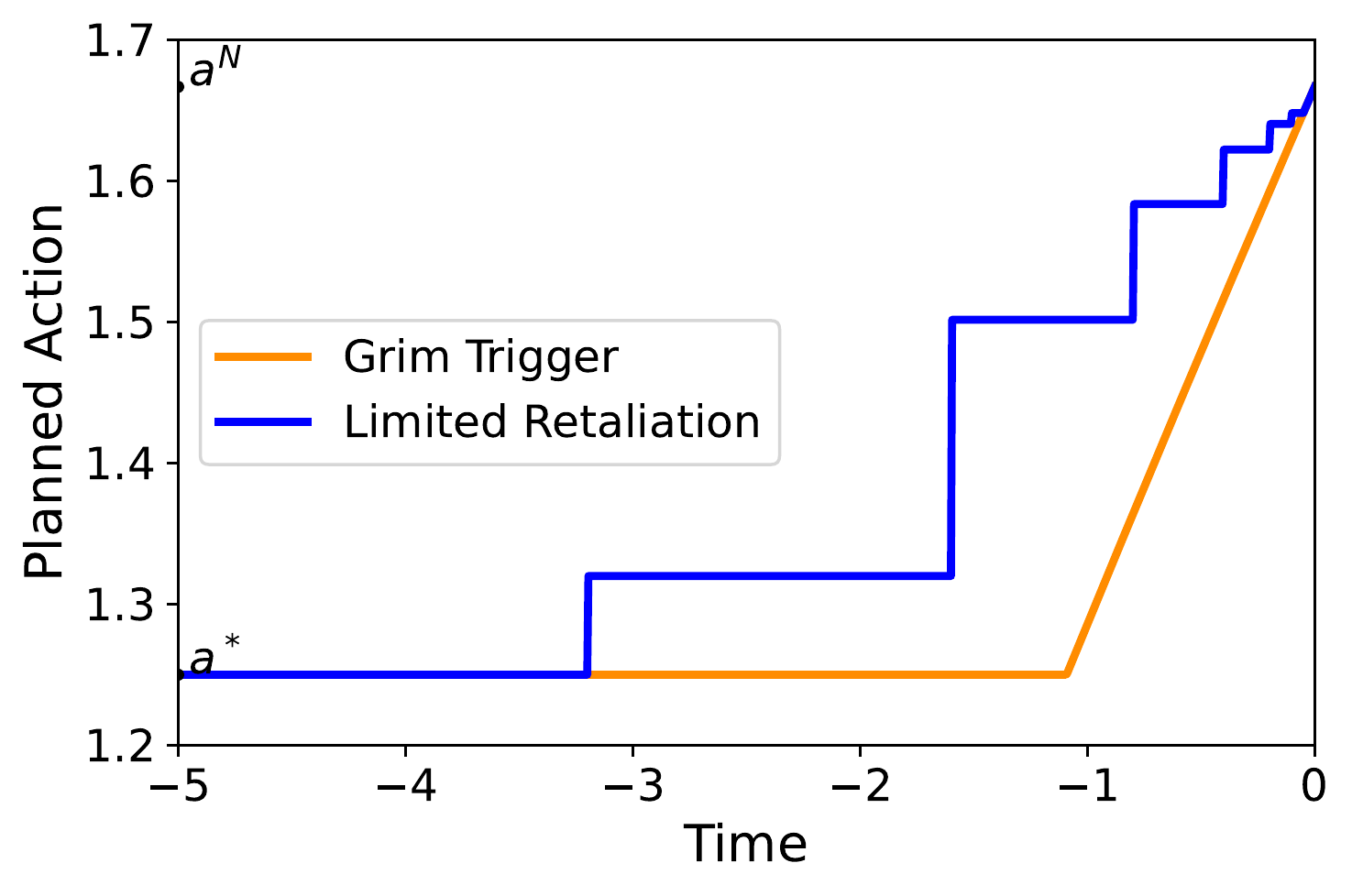}
        \label{fig:third_sub}
    } \\
    \subfigure[Payoff for $k=0.35$, $e=5\%$]
    {
        \includegraphics[scale=0.31]{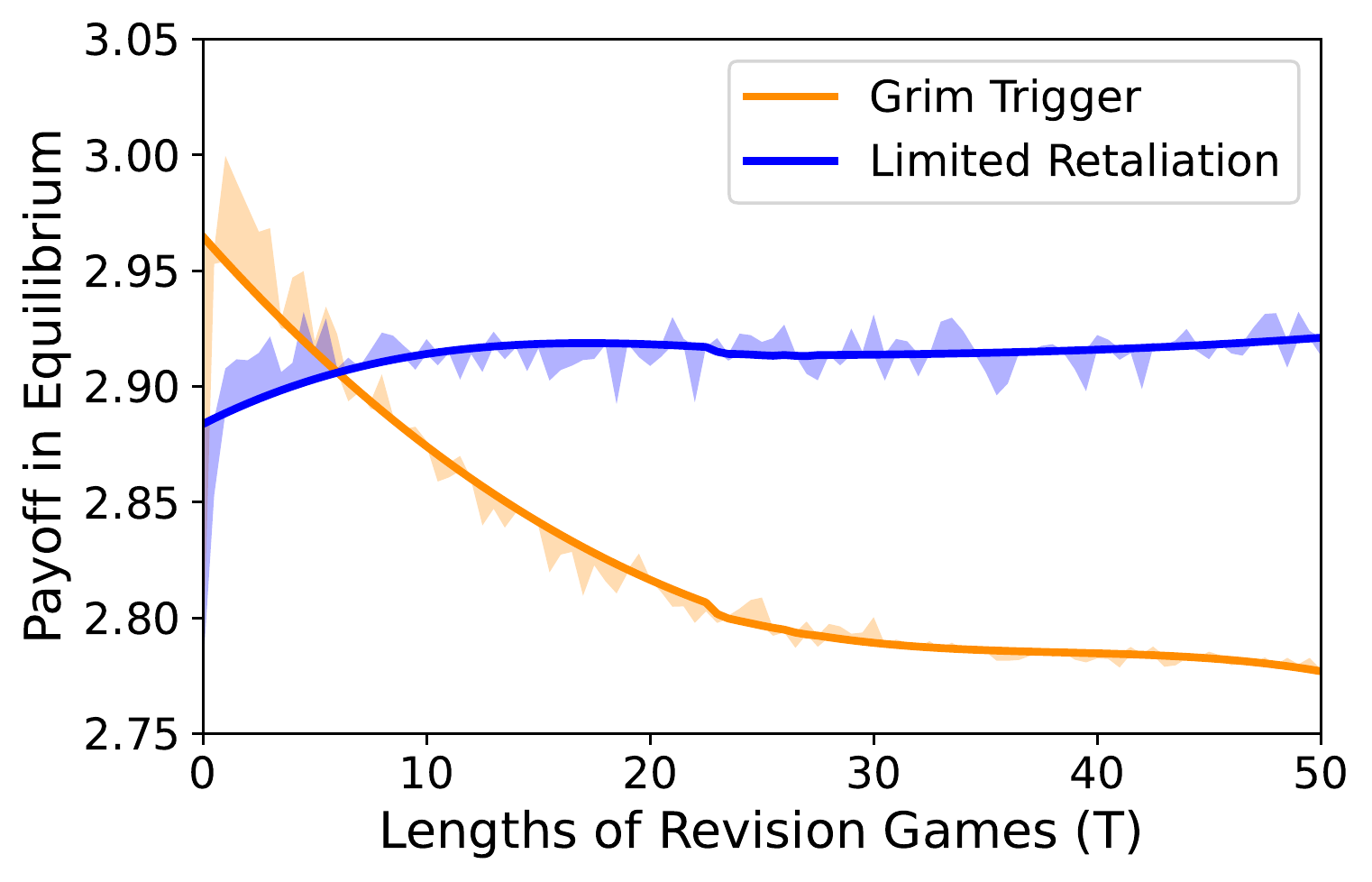}
        \label{fig:third_sub}
    }
    \subfigure[Payoff for $k=0.35$, $e=15\%$]
    {
        \includegraphics[scale=0.31]{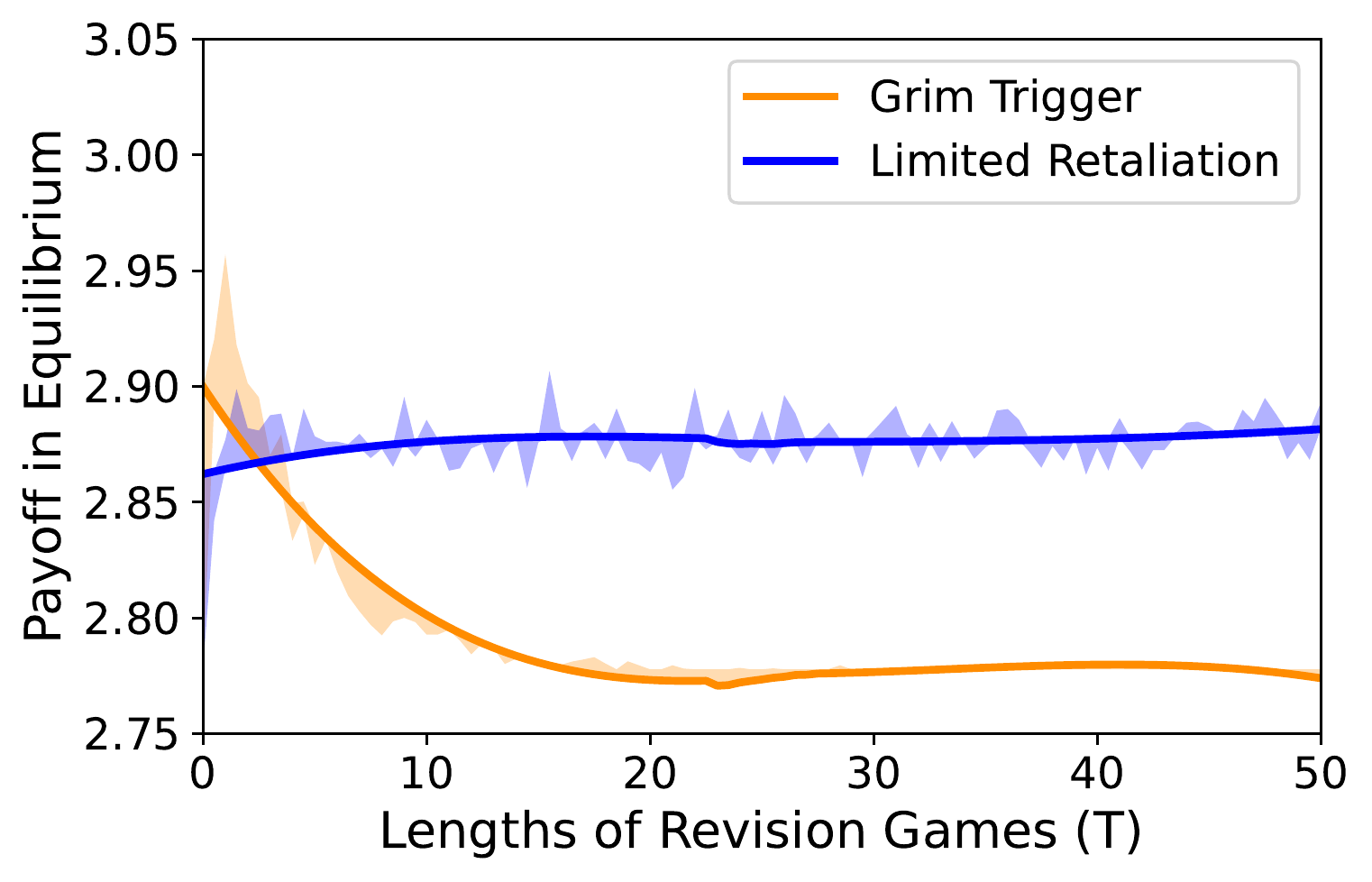}
        \label{fig:third_sub}
    }
        \subfigure[Payoff for $k=0.35$, $e=35\%$]
    {
        \includegraphics[scale=0.31]{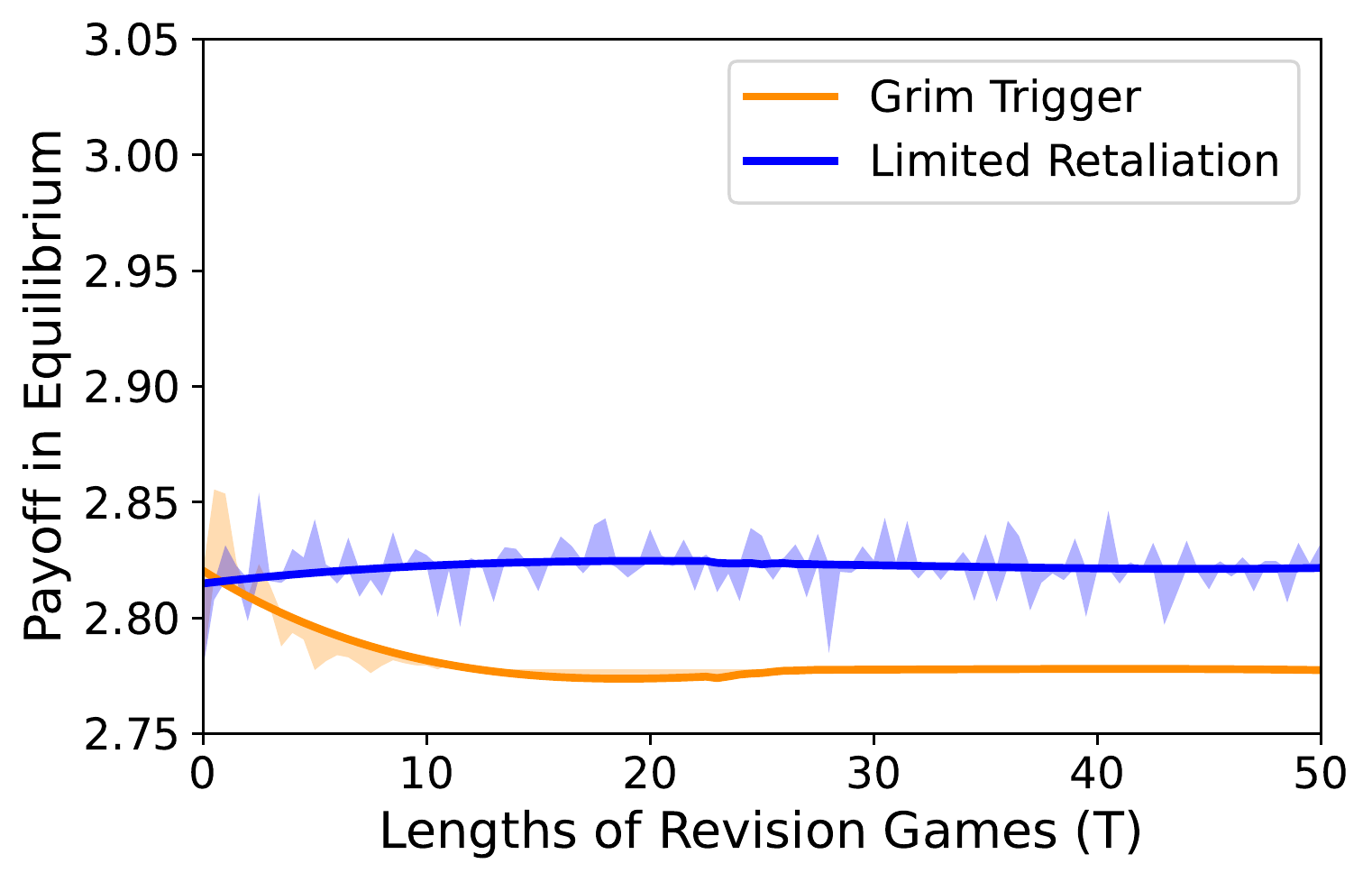}
        \label{fig:third_sub}
    }
            \subfigure[Payoff for $k=0.5$, $e=5\%$]
    {
        \includegraphics[scale=0.31]{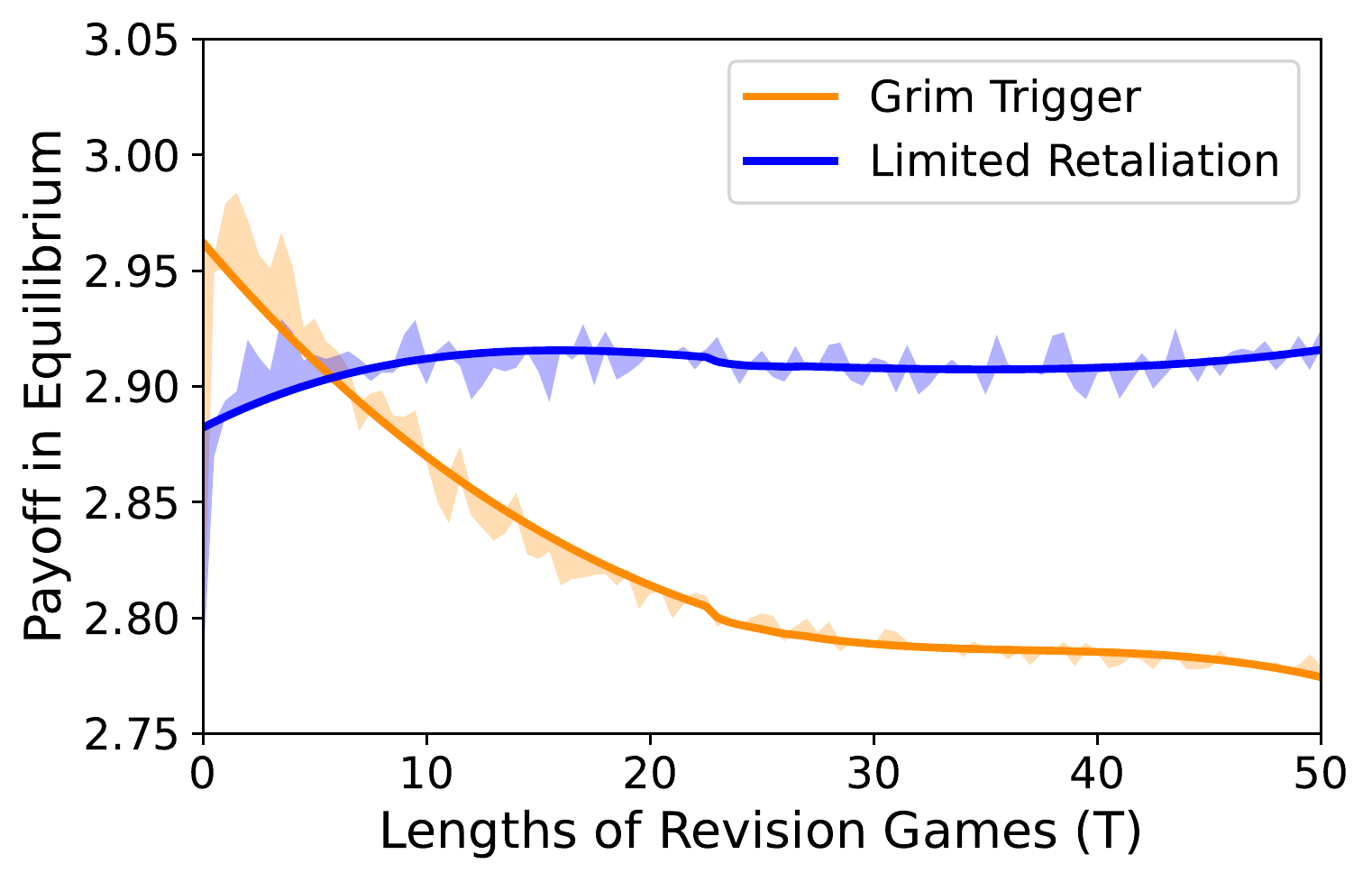}
        \label{fig:third_sub}
    }
        \subfigure[Payoff for $k=0.5$, $e=15\%$]
    {
        \includegraphics[scale=0.31]{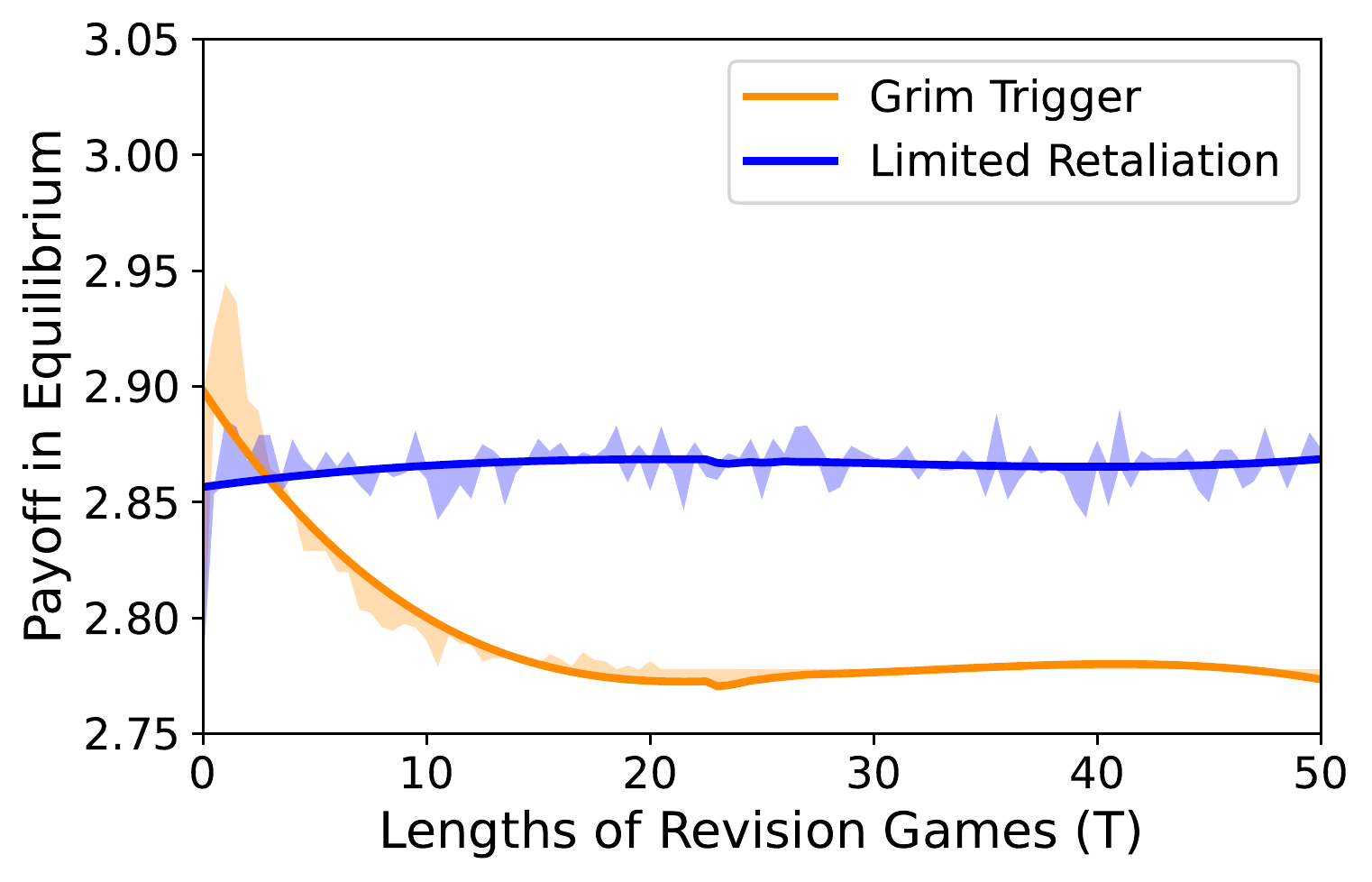}
        \label{fig:third_sub}
    }
    \subfigure[Payoff for $k=0.5$, $e=35\%$]
    {
        \includegraphics[scale=0.31]{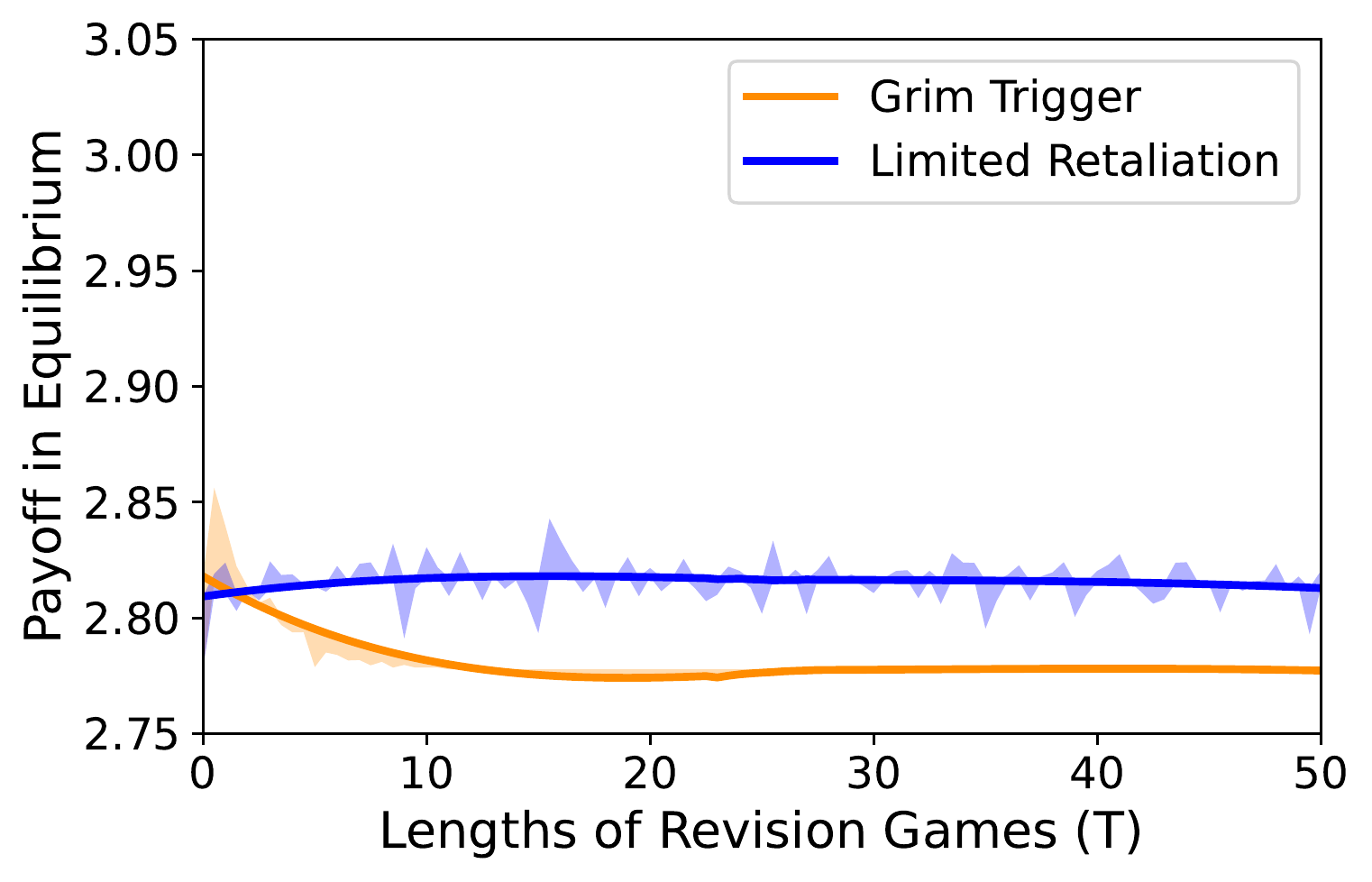}
        \label{fig:third_sub}
    }

    \caption{$(a)$ and $(b)$ show two Limited Retaliations strategies with different MPC plans for revision Cournot game, and $(c)$ to $(h)$ show their equilibrium payoffs under different error rates.}
    \label{Sfig:cd_performance}
\end{figure*}

	\section{Conclusions and Future Work}
    %
    This work is the first one in revision games capturing the mechanism of forgiveness. We analyze the incentive constraint for cooperation and forgiveness in revision games and identify the strategy of limited retaliation with retaliation duration limited to $k$ (LR strategies). LR strategies can return back to mutual cooperation, even when there is an intentional or unintentional deviation by any player. They are cooperative, vengeful, and also forgiving. These properties makes them especially suitable for real-world games where players confront intentional or unintentional errors. 
    We prove that LR strategies can sustain cooperation in subgame perfect equilibrium. We also design an algorithm to find the optimal piecewise constant plan for LR strategies. We finally implement this algorithm in both the continuous prisoner's dilemma and the Cournot revision games. 
    LR strategies is the first attempt to explain how vengeful agents should be and why forgiveness is important in multi-agent interactions with deadlines. They are robust and especially suitable for realistic revision games where players confront observation or action errors, or where players have bounded rationality.


    Before LR strategies, the existing Grim Trigger strategies are never forgiving, which excludes the possibility of resuming cooperation. These strategies are extremely fierce, which are not like realistic human behavior. More importantly, if players have any observation or action error, or imperfect rationality, the strategy players will soon go to a loss-loss situation and can never escape. This makes the social welfare become low. LR strategies can return back to mutual cooperation, even when there is an intentional or unintentional deviation by any player. They are cooperative, vengeful, and also forgiving. These properties make them especially suitable for real-world games where players confront observation or action errors. They well capture the mechanism of forgiveness, which has not yet been discussed in revision games. By the theory, grim trigger is better than LR in revision games if there is no error, this is because it provides the strongest threat. But the real world is noisy. As long as there is any action mistake, imperfect information or bounded rationality, a proper LR is better. To conquer the problems, famous strategies including Tit-for-Tat, Win-Stay Loss-Shift have been identified. But LR is the first strategy better than GT in noisy revision games. 
    
    Our immediate future work is to find other collusive plans and investigate their relationship with piecewise constant plans. This is because in the current work, we only consider that players both use a same LR strategy with a given $k$, which can constitute symmetric SPE. This is reasonable, because many game theory works consider symmetric strategy profile.
    The objective of our current paper is to propose a new equilibrium strategy solution of the revision game. The task about how to learn to, or how to evolve to, or how to coordinate to this equilibrium strategy is an important topic in the future. 
    Thus in the next step, we will relax the symmetric solution and discuss about how agents can know the strategy. We could use correlated equilibrium concept, commitment or learning schemes to investigate these issues.
    Moreover, in the current work, regarding ``one-shot deviation'', we concentrate on changes from the cooperative plan $x$, but do not consider deviation from the retaliation fierceness $k$. Because (1) technically, the investigation on function $x$ is already difficulty enough; and (2) the role of retaliation is to assist a good cooperative plan to be realized by the players' actual play and provide fine robustness against noise. The problem ``\textit{how to secure a cooperation}'' has more positive meaning than ``\textit{how to secure a retaliation}''.  Please note that this does not mean that our analysis is restricted to a specific $k$. The result in the main text can be used to check the performance of different $k$ for a given plan $x$; it can also provide the optimal $x$ for any given value of $k$.  
    

    Besides the above immediate future works, researches on counterfactual regret minimization \cite{zinkevich2007regret,celli2020no}, dynamic games cooperation \cite{martinez2015apology}, dynamic game distributed algorithms \cite{marden2009cooperative} could have underlying relationships with limited retaliation and forgiveness. The strategy adaptation in revision versions of Markov games \cite{zhan2020deep} is also a very promising topic. Regarding the solutions of subgame perfect equilibrium in our game, more advanced variations of backward induction \cite{li2013backward} is an important extension, which would be helpful for more realistic revision games. 
    Finally, applications of LR strategies in various realistic games, including networked games \cite{ye2014self}, real-time autonomous control \cite{chasparis2016design,spica2020real}, automated driving behavior \cite{lemmer2021maneuver}, and pollution regulation \cite{wang2022interplay} are also challenging but valuable topics, application of LR in various game mechanism design, especially in eBay like auction design, is also promising.

\bibliographystyle{unsrtnat}
\bibliography{aaai23} \label{end of main text}





\end{document}